% Template for ICASSP-2024 paper; to be used with:
%          spconf.sty  - ICASSP/ICIP LaTeX style file, and
%          IEEEbib.bst - IEEE bibliography style file.
% --------------------------------------------------------------------------
\documentclass[final]{IEEEtran}
\usepackage{tikz}
%\usepackage[acronym]{glossaries}
%\makeglossaries
% Example definitions.
% --------------------

% Note: this has been tested using MiKTeX 2.9. If you are getting errors, update your packages.
% LTeX: enabled=false

%%% Packages %%%
%\usepackage{setspace} % Double spaces document. Footnotes,
% figures, and tables will still be single spaced, however.
%\doublespacing
%\singlespacing
%\onehalfspacing
% \setstretch{1.5} % set double spacing to 1.5 or anything else.

\usepackage[T1]{fontenc}
\usepackage[utf8]{inputenc}
\usepackage{mathtools}

\usepackage{amssymb,mathrsfs}% Typical maths resource packages
\usepackage{amsthm}
\usepackage{bm}
\usepackage{scalerel}
\usepackage{nicefrac}
\usepackage{microtype} 
\usepackage[shortlabels]{enumitem}
\usepackage{graphicx}
\usepackage{epstopdf}
\DeclareGraphicsExtensions{.eps,.png,.jpg,.pdf}

\usepackage{url}
\usepackage{colortbl}
\usepackage{booktabs}
\usepackage{multirow}
\usepackage{colortbl,xcolor}
\usepackage{soul}
\usepackage{xparse,xstring}
\usepackage{calc}
\usepackage{etoolbox}

\makeatletter
\@ifpackageloaded{natbib}{
	\relax
}{
	\usepackage{cite}
}
\makeatother

%\usepackage{pstricks}
%\usepackage{psfrag}
%\usepackage{syntonly}
%\syntaxonly
%\usepackage[style=base]{caption}
%\captionsetup{
%format = plain,
%font = footnotesize,
%labelfont = sc
%}

\usepackage{array}
\newcolumntype{L}[1]{>{\raggedright\let\newline\\\arraybackslash\hspace{0pt}}m{#1}}
\newcolumntype{C}[1]{>{\centering\let\newline\\\arraybackslash\hspace{0pt}}m{#1}}
\newcolumntype{R}[1]{>{\raggedleft\let\newline\\\arraybackslash\hspace{0pt}}m{#1}}

\makeatletter
\let\MYcaption\@makecaption
\makeatother
\usepackage[font=footnotesize]{subcaption}
\makeatletter
\let\@makecaption\MYcaption
\makeatother

\usepackage{glossaries}
\makeatletter
\sfcode`\.1006

% copy old \gls and \glspl '
\let\oldgls\gls
\let\oldglspl\glspl

% define a non space skipping version of \@ifnextchar
\newcommand\fussy@ifnextchar[3]{%
	\let\reserved@d=#1%
	\def\reserved@a{#2}%
	\def\reserved@b{#3}%
	\futurelet\@let@token\fussy@ifnch}
\def\fussy@ifnch{%
	\ifx\@let@token\reserved@d
		\let\reserved@c\reserved@a
	\else
		\let\reserved@c\reserved@b
	\fi
	\reserved@c}

\renewcommand{\gls}[1]{%
\oldgls{#1}\fussy@ifnextchar.{\@checkperiod}{\@}}
\renewcommand{\glspl}[1]{%
\oldglspl{#1}\fussy@ifnextchar.{\@checkperiod}{\@}}

\newcommand{\@checkperiod}[1]{%
	\ifnum\sfcode`\.=\spacefactor\else#1\fi
}

% '
\robustify{\gls}
\robustify{\glspl}
\makeatother

\newacronym{wrt}{w.r.t.}{with respect to}
\newacronym{RHS}{R.H.S.}{right-hand side}
\newacronym{LHS}{L.H.S.}{left-hand side}
\newacronym{iid}{i.i.d.}{independent and identically distributed}
\newacronym{SOTA}{SOTA}{state-of-the-art}
%\newacronym{MIMO}{MIMO}{mulitple-input multiple-output}
%\newacronym{AOA}{AOA}{angle-of-arrival}
%\newacronym{AOD}{AOD}{angle-of-departure}
%\newacronym{LOS}{LOS}{line-of-sight}
%\newacronym{NLOS}{NLOS}{non-line-of-sight}
%\newacronym{TOA}{TOA}{time-of-arrival}
%\newacronym{TDOA}{TDOA}{time-difference-of-arrival}
%\newacronym{RSS}{RSS}{received signal strength}
%\newacronym{GNSS}{GNSS}{Global Navigation Satellite System}
%\newacronym{GSP}{GSP}{graph signal processing}
%\newacronym{ML}{ML}{machine learning}

%put the float package before hyperref and algorithm package after hyperref for hyperref to work correctly with algorithm
\usepackage{float}

\ifx\notloadhyperref\undefined
	\ifx\loadbibentry\undefined
		\usepackage[hidelinks,hypertexnames=false]{hyperref}
	\else
		\usepackage{bibentry}
		\makeatletter\let\saved@bibitem\@bibitem\makeatother
		\usepackage[hidelinks,hypertexnames=false]{hyperref}
		\makeatletter\let\@bibitem\saved@bibitem\makeatother
	\fi
\else
	\ifx\loadbibentry\undefined
		\relax
	\else
		\usepackage{bibentry}
	\fi
\fi

\usepackage{cleveref-forward}

\crefname{equation}{}{}
\Crefname{equation}{}{}
\crefname{claim}{claim}{claims}
\crefname{step}{step}{steps}
\crefname{line}{line}{lines}
\crefname{condition}{condition}{conditions}
\crefname{dmath}{}{}
\crefname{dseries}{}{}
\crefname{dgroup}{}{}
\crefname{page}{page}{pages}

\crefname{Problem}{Problem}{Problems}
\crefformat{Problem}{Problem~#2#1#3}
\crefrangeformat{Problem}{Problems~#3#1#4 to~#5#2#6}

\crefname{Theorem}{Theorem}{Theorems}
\crefname{Corollary}{Corollary}{Corollaries}
\crefname{Proposition}{Proposition}{Propositions}
\crefname{Lemma}{Lemma}{Lemmas}
\crefname{Definition}{Definition}{Definitions}
\crefname{Example}{Example}{Examples}
\crefname{Assumption}{Assumption}{Assumptions}
\crefname{Remark}{Remark}{Remarks}
\crefname{Rem}{Remark}{Remarks}
\crefname{remarks}{Remarks}{Remarks}
\crefname{Appendix}{Appendix}{Appendices}
\crefname{Supplement}{Supplement}{Supplements}
\crefname{Exercise}{Exercise}{Exercises}
\crefname{TheoremA}{Theorem}{Theorems}
\crefname{CorollaryA}{Corollary}{Corollaries}
\crefname{PropositionA}{Proposition}{Propositions}
\crefname{LemmaA}{Lemma}{Lemmas}
\crefname{DefinitionA}{Definition}{Definitions}
\crefname{ExampleA}{Example}{Examples}
\crefname{RemarkA}{Remark}{Remarks}
\crefname{AssumptionA}{Assumption}{Assumptions}
\crefname{ExerciseA}{Exercise}{Exercises}
\crefname{algorithm}{Algorithm}{Algorithms}
\crefname{figure}{Fig.}{Figs.}
\crefname{table}{Table}{Tables}
\crefname{section}{Section}{Sections}
\crefname{subsection}{Section}{Sections}
\crefname{subsubsection}{Section}{Sections}

\usepackage{crossreftools}
\ifx\notloadhyperref\undefined
	\pdfstringdefDisableCommands{%
		\let\Cref\crtCref
		\let\cref\crtcref
	}
\else
	\relax
\fi

\usepackage{algorithm}
\usepackage{algpseudocode}

%may cause conflict with some packages like tikz, include manually if desired
%load after hyperref
\ifx\loadbreqn\undefined
	\relax
\else
	\usepackage{breqn}
\fi

%%%%%%%%%%%%%%%%%%%%%%%%%%%%%%%%%%%%%%%%%%%%%%%%

\interdisplaylinepenalty=2500   % To restore IEEEtran ability to automatically break
% within multiline equations, when using amsmath.

%%%%%%%%%%%%%%%%%%%%%%%%%%%%%%%%%%%%%%%%

%Theorem declarations
% \def\renewtheorem in your document if you want to redefine these theorem environments.
% Alternatively, call \clearthms

\makeatletter
\def\cleartheorem#1{%
    \expandafter\let\csname#1\endcsname\relax
    \expandafter\let\csname c@#1\endcsname\relax
}
\def\clearthms#1{ \@for\tname:=#1\do{\cleartheorem\tname} }
\makeatother

\ifx\renewtheorem\undefined
	% for use in main body
	\ifx\useTheoremCounter\undefined
		\newtheorem{Theorem}{Theorem}
		\newtheorem{Corollary}{Corollary}
		\newtheorem{Proposition}{Proposition}
		\newtheorem{Lemma}{Lemma}
	\else
		\newtheorem{Theorem}{Theorem}

	\fi

	\newtheorem{Definition}{Definition}
	
	\newtheorem{Remark}{Remark}

	% for use in the appendix

\fi

% Remarks
\theoremstyle{remark}

\theoremstyle{plain}

% Special Headings
%\newtheorem*{Prop1}{Proposition 1} %needs amsthm

%\newtheoremstyle{nonum}{}{}{\itshape}{}{\bfseries}{.}{ }{#1 (\mdseries #3)}
%\theoremstyle{nonum}
%\newtheorem{Example**}{Example 1}

%\renewcommand{\QED}{\QEDopen} % changes end of proof box to open box.

\newcommand{\qednew}{\nobreak \ifvmode \relax \else
		\ifdim\lastskip<1.5em \hskip-\lastskip
			\hskip1.5em plus0em minus0.5em \fi \nobreak
		\vrule height0.75em width0.5em depth0.25em\fi}

% achieves the functionality of \tag for subequations environment
\makeatletter

\makeatother

\newcommand{\nn}{\nonumber\\ }

% Move down subscripts for some symbols like \chi
\NewDocumentCommand{\movedownsub}{e{^_}}{%
	\IfNoValueTF{#1}{%
		\IfNoValueF{#2}{^{}}% neither ^ nor _, do nothing; if no ^ but _, add ^{}
	}{%
		^{#1}% add superscript if present
	}%
	\IfNoValueF{#2}{_{#2}}% add subscript if present
}

% chi
\let\latexchi\chi
\RenewDocumentCommand{\chi}{}{\latexchi\movedownsub}

%Number sets
\newcommand{\Real}{\mathbb{R}}

\newcommand{\Complex}{\mathbb{C}}

% imaginary number i

% Calligraphic stuff

\newcommand{\calE}{\mathcal{E}}

\newcommand{\calG}{\mathcal{G}}
\newcommand{\calH}{\mathcal{H}}

\newcommand{\calM}{\mathcal{M}}

\newcommand{\calS}{\mathcal{S}}
\newcommand{\calT}{\mathcal{T}}
\newcommand{\calU}{\mathcal{U}}
\newcommand{\calV}{\mathcal{V}}
\newcommand{\calW}{\mathcal{W}}

% Boldface stuff

\newcommand{\bA}{\mathbf{A}}

\newcommand{\bD}{\mathbf{D}}

\newcommand{\bh}{\mathbf{h}}

\newcommand{\bL}{\mathbf{L}}

\newcommand{\bt}{\mathbf{t}}

\newcommand{\bu}{\mathbf{u}}
\newcommand{\bU}{\mathbf{U}}

\newcommand{\bx}{\mathbf{x}}

\newcommand{\bY}{\mathbf{Y}}

% Numbers bb font

\newcommand{\bbR}{\mathbb{R}}

% Mathfrak font

% Mathscr

% define some useful uppercase Greek letters in regular and bold sf
\DeclareSymbolFont{bsfletters}{OT1}{cmss}{bx}{n}
\DeclareSymbolFont{ssfletters}{OT1}{cmss}{m}{n}
\DeclareMathSymbol{\bsfGamma}{0}{bsfletters}{'000}
\DeclareMathSymbol{\ssfGamma}{0}{ssfletters}{'000}
\DeclareMathSymbol{\bsfDelta}{0}{bsfletters}{'001}
\DeclareMathSymbol{\ssfDelta}{0}{ssfletters}{'001}
\DeclareMathSymbol{\bsfTheta}{0}{bsfletters}{'002}
\DeclareMathSymbol{\ssfTheta}{0}{ssfletters}{'002}
\DeclareMathSymbol{\bsfLambda}{0}{bsfletters}{'003}
\DeclareMathSymbol{\ssfLambda}{0}{ssfletters}{'003}
\DeclareMathSymbol{\bsfXi}{0}{bsfletters}{'004}
\DeclareMathSymbol{\ssfXi}{0}{ssfletters}{'004}
\DeclareMathSymbol{\bsfPi}{0}{bsfletters}{'005}
\DeclareMathSymbol{\ssfPi}{0}{ssfletters}{'005}
\DeclareMathSymbol{\bsfSigma}{0}{bsfletters}{'006}
\DeclareMathSymbol{\ssfSigma}{0}{ssfletters}{'006}
\DeclareMathSymbol{\bsfUpsilon}{0}{bsfletters}{'007}
\DeclareMathSymbol{\ssfUpsilon}{0}{ssfletters}{'007}
\DeclareMathSymbol{\bsfPhi}{0}{bsfletters}{'010}
\DeclareMathSymbol{\ssfPhi}{0}{ssfletters}{'010}
\DeclareMathSymbol{\bsfPsi}{0}{bsfletters}{'011}
\DeclareMathSymbol{\ssfPsi}{0}{ssfletters}{'011}
\DeclareMathSymbol{\bsfOmega}{0}{bsfletters}{'012}
\DeclareMathSymbol{\ssfOmega}{0}{ssfletters}{'012}

% Greek

\newcommand{\bdelta}{\bm{\delta}}

\makeatletter
\newcommand*\rel@kern[1]{\kern#1\dimexpr\macc@kerna}
\newcommand*\widebar[1]{%
  \begingroup
  \def\mathaccent##1##2{%
    \rel@kern{0.8}%
    \overline{\rel@kern{-0.8}\macc@nucleus\rel@kern{0.2}}%
    \rel@kern{-0.2}%
  }%
  \macc@depth\@ne
  \let\math@bgroup\@empty \let\math@egroup\macc@set@skewchar
  \mathsurround\z@ \frozen@everymath{\mathgroup\macc@group\relax}%
  \macc@set@skewchar\relax
  \let\mathaccentV\macc@nested@a
  \macc@nested@a\relax111{#1}%
  \endgroup
}
\makeatother

%MathOperator
\DeclareMathOperator*{\argmax}{arg\,max}
\DeclareMathOperator*{\argmin}{arg\,min}

\DeclareMathOperator{\ST}{s.t.\ }

\DeclareMathOperator{\diag}{diag}

\DeclareMathOperator{\var}{var}

\DeclareMathOperator{\cov}{cov}

\DeclareMathOperator{\rank}{rank}

\DeclareMathOperator{\ima}{im}

%Paired delimiters
\newcommand{\ifbcdot}[1]{\ifblank{#1}{\cdot}{#1}}

\DeclarePairedDelimiterX\abs[1]{\lvert}{\rvert}{\ifbcdot{#1}}
\DeclarePairedDelimiterX\parens[1]{(}{)}{\ifbcdot{#1}}
\DeclarePairedDelimiterX\brk[1]{[}{]}{\ifbcdot{#1}}
\DeclarePairedDelimiterX\braces[1]{\{}{\}}{\ifbcdot{#1}}
\DeclarePairedDelimiterX\angles[1]{\langle}{\rangle}{\ifblank{#1}{\cdot,\cdot}{#1}}
\DeclarePairedDelimiterX\ip[2]{\langle}{\rangle}{\ifbcdot{#1},\ifbcdot{#2}}
\DeclarePairedDelimiterX\norm[1]{\lVert}{\rVert}{\ifbcdot{#1}}
\DeclarePairedDelimiterX\ceil[1]{\lceil}{\rceil}{\ifbcdot{#1}}
\DeclarePairedDelimiterX\floor[1]{\lfloor}{\rfloor}{\ifbcdot{#1}}

% Math symbol font matha
\DeclareFontFamily{U}{matha}{\hyphenchar\font45}
\DeclareFontShape{U}{matha}{m}{n}{
      <5> <6> <7> <8> <9> <10> gen * matha
      <10.95> matha10 <12> <14.4> <17.28> <20.74> <24.88> matha12
      }{}
\DeclareSymbolFont{matha}{U}{matha}{m}{n}
\DeclareFontSubstitution{U}{matha}{m}{n}

% Math symbol font mathb
\DeclareFontFamily{U}{mathx}{\hyphenchar\font45}
\DeclareFontShape{U}{mathx}{m}{n}{
      <5> <6> <7> <8> <9> <10>
      <10.95> <12> <14.4> <17.28> <20.74> <24.88>
      mathx10
      }{}
\DeclareSymbolFont{mathx}{U}{mathx}{m}{n}
\DeclareFontSubstitution{U}{mathx}{m}{n}

% Symbol definition
\DeclareMathDelimiter{\vvvert}{0}{matha}{"7E}{mathx}{"17}
\DeclarePairedDelimiterX\vertiii[1]{\vvvert}{\vvvert}{\ifbcdot{#1}}

\DeclarePairedDelimiterXPP\trace[1]{\operatorname{Tr}}{(}{)}{}{\ifbcdot{#1}} % column vector
\DeclarePairedDelimiterXPP\col[1]{\operatorname{col}}{\{}{\}}{}{\ifbcdot{#1}} % column vector
\DeclarePairedDelimiterXPP\row[1]{\operatorname{row}}{\{}{\}}{}{\ifbcdot{#1}} % row vector
\DeclarePairedDelimiterXPP\erf[1]{\operatorname{erf}}{(}{)}{}{\ifbcdot{#1}}
\DeclarePairedDelimiterXPP\erfc[1]{\operatorname{erfc}}{(}{)}{}{\ifbcdot{#1}}
\DeclarePairedDelimiterXPP\KLD[2]{D}{(}{)}{}{\ifbcdot{#1}\, \delimsize\|\, \ifbcdot{#2}} % KL divergence
\DeclarePairedDelimiterXPP\op[2]{\operatorname{#1}}{(}{)}{}{#2} % general operator

% Math relations

% transpose notation
% Hermitian transpose notation
 %set complement
 % for integrals like \int f(x) \ud x
 % identity function

% Math functions
\DeclarePairedDelimiterXPP\indicate[1]{{\bf 1}}{\{}{\}}{}{\ifbcdot{#1}}

\newcommand{\tc}[1]{^{(#1)}}
\NewDocumentCommand\ofrac{s m}{%
	\IfBooleanTF#1%
	{\dfrac{1}{#2}}%
	{\frac{1}{#2}}%
}
\NewDocumentCommand\ddfrac{s m m}{%
	\IfBooleanTF#1%
	{\dfrac{\mathrm{d} {#2}}{\mathrm{d} {#3}}}%
	{\frac{\mathrm{d} {#2}}{\mathrm{d} {#3}}}%
}
\NewDocumentCommand\ppfrac{s m m}{%
	\IfBooleanTF#1%
	{\dfrac{\partial {#2}}{\partial {#3}}}%
	{\frac{\partial {#2}}{\partial {#3}}}%
}

% Default is :. Use \renewcommand{\setgiven}{\vert} to change.
\newcommand{\setgiven}{:}
\providecommand\given{}
% can be useful to refer to this outside \set

\DeclarePairedDelimiterX\Set[2]\{\}{%
	\if#1:%
		\renewcommand\given{\SetSymbol{:}}%
	\else%
		\renewcommand\given{\SetSymbol[\delimsize]{#1}}%
	\fi%
#2
}

% \set{x \given f(x)=1} gives \{x : f(x)=1\}
% \set[\vert]{x \given f(x)=1} gives \{x \vert f(x)=1\}
% Starred version uses \left and \right
\NewDocumentCommand\set{s O{\setgiven} m}{%
	\IfBooleanTF#1%
	{\Set*{#2}{#3}}%
	{\Set{#2}{#3}}%
}

%\NewDocumentCommand\set{s m t| m}{%
%\IfBooleanTF#1%
%{\left\{\, #2\mathrel{} \IfBooleanTF{#3}{\middle|}{:}\mathrel{}  #4\, \right\}}%
%{\{\, #2 \IfBooleanTF{#3}{\mid}{\mathrel{} : \mathrel{}} #4\, \}}% 
%}

\NewDocumentCommand{\evalat}{ s O{\big} m e{_^} }{%
\IfBooleanTF{#1}%
{\left. #3 \right|}{#3#2|}%
\IfValueT{#4}{_{#4}}%
\IfValueT{#5}{^{#5}}%
}

%%%%%%%%%%%%%%%%%%%%%%%%%%%%%%%%%%%%%%%%%%%%%%%%%%%%%%%%
% \P and \E simplified to remove @|, use \given instead.

\providecommand\given{}
\DeclarePairedDelimiterXPP\cprob[1]{}(){}{
\renewcommand\given{\nonscript\,\delimsize\vert\allowbreak\nonscript\,\mathopen{}}%
#1%
}
\DeclarePairedDelimiterXPP\cexp[1]{}[]{}{
\renewcommand\given{\nonscript\,\delimsize\vert\allowbreak\nonscript\,\mathopen{}}%
#1%
}

% Allows the use of 
% \P : \mathbb{P}
% \P(X) : \mathbb{P}\left({X}\right)
% \P_{p}(X) : \mathbb{P}_{p}\left({X}\right)
% \P(X \given Y) : \mathbb{P}\left({X}\, \middle| \, {Y}\right). 
% Starred version \P* does not use \left and \right. Maybe used in inline equations. 
\DeclareDocumentCommand \P { s e{_^} d() g } {%
	\mathbb{P}%
	\IfBooleanTF{#1}%
		{
			\IfValueT{#2}{_{#2}}%
			\IfValueT{#3}{^{#3}}%
			\IfValueTF{#5}{\cprob{#4 \given #5}}{\IfValueT{#4}{\cprob{#4}}}%
		}%
		{
			\IfValueT{#2}{_{#2}}%
			\IfValueT{#3}{^{#3}}%
			\IfValueTF{#5}{\cprob*{#4 \given #5}}{\IfValueT{#4}{\cprob*{#4}}}%
		}%
}

% Allows the use of 
% \E : \mathbb{E}
% \E[X] : \mathbb{E}\left[{X}\right]
% \E_{p}[X] or \E{p}[X] : \mathbb{E}_{p}\left[{X}\right]
% \E[X \given Y]: \mathbb{E}\left[{X}\, \middle| \, {Y}\right]. 
% Starred version \E* does not use \left and \right. Maybe used in inline equations. 
\DeclareDocumentCommand \E { s e{_^} o g } {%
	\mathbb{E}%
	\IfBooleanTF{#1}%
		{
			\IfValueT{#2}{_{#2}}%
			\IfValueT{#3}{^{#3}}%
			\IfValueTF{#5}{\cexp{#4 \given #5}}{\IfValueT{#4}{\cexp{#4}}}%
		}%
		{
			\IfValueT{#2}{_{#2}}%
			\IfValueT{#3}{^{#3}}%	
			\IfValueTF{#5}{\cexp*{#4 \given #5}}{\IfValueT{#4}{\cexp*{#4}}}%		
			%\IfValueT{#4}{\cexp*{#4}}%
		}%
}

\DeclareDocumentCommand \Var { s e{_^} d() g } {%
	\var%
	\IfBooleanTF{#1}%
		{
			\IfValueT{#2}{_{#2}}%
			\IfValueT{#3}{^{#3}}%
			\IfValueTF{#5}{\cprob{#4 \given #5}}{\IfValueT{#4}{\cprob{#4}}}%
		}%
		{
			\IfValueT{#2}{_{#2}}%
			\IfValueT{#3}{^{#3}}%	
			\IfValueTF{#5}{\cprob*{#4 \given #5}}{\IfValueT{#4}{\cprob*{#4}}}%		
			%\IfValueT{#4}{\cprob*{#4}}%
		}%
}

\DeclareDocumentCommand \Cov { s e{_^} d() g } {%
	\cov%
	\IfBooleanTF{#1}%
		{
			\IfValueT{#2}{_{#2}}%
			\IfValueT{#3}{^{#3}}%
			\IfValueTF{#5}{\cprob{#4 \given #5}}{\IfValueT{#4}{\cprob{#4}}}%
		}%
		{
			\IfValueT{#2}{_{#2}}%
			\IfValueT{#3}{^{#3}}%	
			\IfValueTF{#5}{\cprob*{#4 \given #5}}{\IfValueT{#4}{\cprob*{#4}}}%		
			%\IfValueT{#4}{\cprob*{#4}}%
		}%
}

% General distribution 
% E.g., \dist{Beta}[a,b][x] gives Beta(x | a,b); \dist{Beta}[a,b] gives Beta(a,b)
\ExplSyntaxOn
\NewDocumentCommand \dist {m o o} {%
\mathrm{#1}\left(%
	\IfValueT{#3}{%
		\tl_if_blank:nTF{ #3 }{\cdot\, \middle|\, }{#3\, \middle|\, }%
	}
	\IfValueT{#2}{#2}%
\right)%
}
\ExplSyntaxOff

%Misc

% Colored underbrace/overbrace: 
% \cbrace[blue](5em){text}{text in underbrace}
% \cbrace+[blue](5em){text}{text in overbrace}
\NewDocumentCommand {\cbrace} {t+ D[]{black} D(){\widthof{#5}} m m } {%
	\begingroup%
		\color{#2}
		\IfBooleanTF{#1}{%
			\overbrace{#4}^%
		}{
			\underbrace{#4}_%
		}%
		{\parbox[c]{#3}{\centering\footnotesize{#5}}}%
	\endgroup% 
}

\let\oldforall\forall
\renewcommand{\forall}{\oldforall \, }

\let\oldexist\exists
\renewcommand{\exists}{\oldexist \, }

% Tables
\makeatletter

\newcommand{\rankcolor}[2]{%
	\expandafter\renewcommand\csname #1\endcsname[1]{%
		\ifblank{##1}{%
			{\color{#2} \textbf{#2}}%
		}{%
			\ifmmode
				\textcolor{#2}{\bm{##1}}%
			\else%
				{\color{#2} \textbf{##1}}%
			\fi	
		}%
	}
}

% You can redefine these using \rankcolor in your manuscript if necessary.
\rankcolor{first}{red}
\rankcolor{second}{blue}
\rankcolor{third}{cyan}
\makeatother

% Figures
%\renewcommand{\figurename}{Fig.}

\graphicspath{{./Figures/}{./figures/}}
\pdfsuppresswarningpagegroup=1

% Need to enable write18 to use this.
\DeclareDocumentCommand{\includeCroppedPdf}{ o O{./Figures/} m }{
	\IfFileExists{#2#3-crop.pdf}{}{%
		\immediate\write18{pdfcrop #2#3.pdf #2#3-crop.pdf}}%
	\includegraphics[#1]{#2#3-crop.pdf}
}

%%%%%%%%%%%%%%%%%%%%%%%%%%%%%%%%%%%%%%%%%%%%%%%%%%%%%%%%%%%%%%%%%%%%%%%%%

% Supplement

% Use with xr-hyper
\makeatletter
\newcommand*{\addFileDependency}[1]{% argument=file name and extension
  \typeout{(#1)}
  \@addtofilelist{#1}
  \IfFileExists{#1}{}{\typeout{No file #1.}}
}
\makeatother

% Editing
\definecolor{gray90}{gray}{0.9}
\def\colorlist{red,blue,brown,cyan,darkgray,gray,lightgray,green,lime,magenta,olive,orange,pink,purple,teal,violet,white,yellow}

% Define the \createcolor macro
\makeatletter
\def\startcomment{[}
\ifx\nohighlights\undefined
	\newcommand{\createcolor}[1]{%
			\expandafter\newcommand\csname #1\endcsname[1]{{\color{#1} ##1}}%
	}
	\newcommand{\msout}[1]{\text{\color{green} \st{\ensuremath{#1}}}}
	\newcommand{\del}[1]{{\color{green}\ifmmode \msout{#1}\else\st{#1}\fi}}
\else
	\newcommand{\createcolor}[1]{%
			\expandafter\newcommand\csname #1\endcsname[1]{%
				\noexpandarg%
				\StrChar{##1}{1}[\firstletter]%
				\if\firstletter\startcomment%
					\relax
				\else%
					##1
				\fi
			}%
	}
	\newcommand{\msout}[1]{}
	\newcommand{\del}[1]{}
\fi

\def\@tempa#1,{%
    \ifx\relax#1\relax\else
        \createcolor{#1}%
        \expandafter\@tempa
    \fi
}
\expandafter\@tempa\colorlist,\relax,
\makeatother

\newcommand{\hhide}[1]{}
%\newcommand{\hhide}[1]{{\color{magenta} [TO BE EXCLUDED] #1}}

%%%%%%%%%%%%%%%%%%%%%%%%%%%%%%%%%%%%%%%%%%%%%%%%%
% For diagnosis: if activated, will show what is causing 
% LaTeX Warning: Label(s) may have changed. Rerun to get cross-references right.

\ifx\diagnoselabel\undefined
	\relax
\else
	\makeatletter
	\def\@testdef #1#2#3{%
		\def\reserved@a{#3}\expandafter \ifx \csname #1@#2\endcsname
			\reserved@a  \else
			\typeout{^^Jlabel #2 changed:^^J%
				\meaning\reserved@a^^J%
				\expandafter\meaning\csname #1@#2\endcsname^^J}%
			\@tempswatrue \fi}
	\makeatother
\fi

%%%%%%%%%%%%%%%%%%%%%%%%%%%%%%%%%%%%%%%%%%%%%%%%%%

\newcommand{\boldell}{{\bm{\ell}}}

\newacronym{GSP}{GSP}{graph signal processing}
\newacronym{GGSP}{GGSP}{generalized graph signal processing}
\newacronym{PSWF}{PSWF}{prolate spheroidal wave function}
\newacronym{GFT}{GFT}{graph Fourier transform}
\newacronym{WLOG}{WLOG}{without loss of generality}
\newacronym{JFT}{JFT}{joint Fourier transform}
\newacronym{STVFT}{STVFT}{short time-vertex Fourier transform}
\newacronym{STVWT}{STVWT}{spectral time-vertex wavelet transform}

\begin{document}
\def\BibTeX{{\rm B\kern-.05em{\sc i\kern-.025em b}\kern-.08em
    T\kern-.1667em\lower.7ex\hbox{E}\kern-.125emX}}
% Title.
% ------
\title{Generalized Graph Signal Reconstruction via the Uncertainty Principle}
\author{Yanan Zhao$^{1}$, Xingchao Jian$^{1}$, Feng Ji$^{1}$, Wee Peng Tay$^{1}$ and Antonio Ortega$^{2}$\\
{$^{1}$School of Electrical and Electronic Engineering, Nanyang Technological University, Singapore,\\
$^{2}$Department of Electrical and Computer Engineering,  University of Southern California, Los Angeles, USA}
}
% Single address.
% ---------------
% \author{
% {Yanan Zhao, Xingchao Jian, Feng Ji, Wee Peng Tay}
% 	{School A-B\\
% 	Department A-B\\
% 	Address A-B}
%  {Author-four}
% 	{School C-D\\
% 	Department C-D\\
% 	Address C-D}
% }

%
% For example:
% ------------
% \address{School\\
% 	Department\\
% 	Address}
%
% Two addresses (uncomment and modify for two-address case).
% ----------------------------------------------------------
%\twoauthors
% {A. Author-one, B. Author-two\sthanks{Thanks to XYZ agency for funding.}}
% 	{School A-B\\
% 	Department A-B\\
% 	Address A-B}
%  {C. Author-three, D. Author-four\sthanks{The fourth author performed the work
% 	while at ...}}
% 	{School C-D\\
% 	Department C-D\\
% 	Address C-D}
%

% \begin{document}
%\ninept
%
\maketitle
\begin{abstract}
We introduce a novel uncertainty principle for generalized graph signals that extends classical time-frequency and graph uncertainty principles into a unified framework. By defining joint vertex-time and spectral-frequency spreads, we quantify signal localization across these domains, revealing a trade-off between them. This framework allows us to identify a class of signals with maximal energy concentration in both domains, forming the fundamental atoms for a new joint vertex-time dictionary. This dictionary enhances signal reconstruction under practical constraints, such as incomplete or intermittent data, commonly encountered in sensor and social networks. Numerical experiments on real-world datasets demonstrate the effectiveness of the proposed approach, showing improved reconstruction accuracy and noise robustness compared to existing methods.
\end{abstract}
\begin{IEEEkeywords}
Uncertainty principle, localized representation, joint vertex-time dictionary, generalized graph signal reconstruction
\end{IEEEkeywords}
\section{Introduction}
\label{sec:intro}
The uncertainty principle, developed by Slepian, Landau, and Pollack \cite{Slepian1983,Laudau1980}, is fundamental in signal processing, emphasizing the trade-off between a signal's time duration and frequency bandwidth, indicating that a signal cannot be precisely localized in both domains simultaneously. \Glspl{PSWF} \cite{Slepian1961,Pollak1961,Laudau1962,Slepian1978,orfanidis1996} optimally balance time and frequency localization by maximizing energy concentration within a finite time interval while being band-limited to a specific frequency range. These functions serve as efficient basis functions for signal representation, minimizing information loss under constraints in both time and frequency domains. Analogously, an uncertainty principle for signals on graphs \cite{Tsitsvero2016} describes a trade-off between localization in the graph and spectral domains with vertex and spectral spreads reflecting the concentration of the signal's energy in specific vertex subsets and spectral components, respectively. A class of graph signals that are maximally concentrated in both domains forms the basis for constructing dictionaries \cite{Tsitsvero2015} that enhance robust graph signal representation, particularly when dealing with missing or damaged vertices.
% Optimizing energy concentration in specific graph and spectral regions enhances the robustness of these signal representations.

Traditional \gls{GSP} primarily focuses on information in the graph domain rather than the time domain. The vertex-time GSP theory \cite{Loukas2016,Perraudin2015,Zhang2021} addresses this gap by integrating information from both the graph and discrete-time domain. It introduced joint vertex-time Fourier transforms, filters, harmonic analysis, and redundant joint vertex-time dictionaries \cite{Grassi2016,Grassi2018} for localized signal representations, such as the short vertex-time Fourier transform (STVFT) and spectral time-vertex wavelet transform (STVWT). Motivated by applications in sensor and social networks, where nodes observe time-varying continuous signals, this theory has been further extended to accommodate vertex signals from a potentially infinite-dimensional, separable Hilbert space \cite{JiTay:J19}, termed the \gls{GGSP} framework.

Practical challenges, including sensor damage, limited operation times, and environmental factors, can restrict data availability and complicate signal reconstruction. Robust signal processing techniques are thus required to manage incomplete or intermittent data, ensuring reliable performance in dynamic and constrained settings. To address these challenges, it is essential to design basis functions that are maximally energy-concentrated over subsets of vertices and time intervals in the joint vertex-time domain. These functions effectively capture available information during operational periods by focusing on localized subsets relevant to the sensor data. Inspired by energy-concentrated basis functions derived from uncertainty principles for graph and continuous-time signals, we propose an uncertainty principle for generalized graph signals and utilizing it to design energy-concentrated basis functions for signal reconstruction.

The contributions of this paper are threefold: a) We introduce a novel uncertainty principle for generalized graph signals, defining joint vertex-time and spectral-frequency spreads to measure a signal’s localization in these domains. This principle includes existing uncertainty principles for both graph and continuous-time signals as special cases. b) We identify a class of generalized graph signals that are maximally concentrated in both joint vertex-time and spectral-frequency domains. We propose to use these signals as the dictionary for signal reconstruction. We derive their localization properties, establish conditions for perfect localization in both domains, and outline the construction and optimization of the dictionary to improve reconstruction accuracy. c) We test our proposed joint vertex-time dictionary learning algorithm's performance on real datasets, demonstrating its effectiveness with limited data and its superior noise resilience compared to other methods.

\section{Basic definitions}
% \label{sec:definition}
% \subsection{Graphs and signals}
In this section, we introduce the basic definitions in \gls{GGSP} that are used in this paper. Consider a simple, connected, undirected graph $\calG=(\calV,\calE)$ with $N$ vertices $\calV=\set{1, 2, \dots, N}$ and weighted edges $\mathcal{E}=\set{a_{ij}}_{i,j\in \mathcal{V}}$. The graph has no self-loops, and the edge weights satisfy $a_{ij} > 0$ if vertices $i$ and $j$ are connected, and $a_{ij} = 0$ otherwise. The adjacency matrix $\bA = [a_{ij}]$ is symmetric with zero diagonal entries. The degree of vertex $i$ is $d_{i}:=\sum_{j=1}^{N}a_{ij}$, and the degree matrix is $\bD=\diag \set{d_1,d_2,\dots,d_N}$. The graph Laplacian matrix is given by $\bL=\bD-\bA$. We take $\bL$ to be the graph shift operator (GSO). In \gls{GSP}, a finite-energy graph signal can be written as a function in $L^2(\calV)$ or a vector in $\Complex^N$. The Laplacian matrix $\bL$ has an eigen-decomposition $ \bL = \bU\Lambda\bU^{\ast}=\sum_{i=1}^{N}\lambda_{i}\bu_{i}\bu_{i}^{\ast}$, where $\Lambda$ is a diagonal matrix with non-negative real eigenvalues $\set{\lambda_{i}}_{i=1,\dots,N}$, and real-valued orthonormal eigenvectors $\set{\bu_{i}}_{i=1,\dots,N}$. Here, $(\cdot)^{\ast}$ denotes conjugate transpose. The graph Fourier transform (GFT) of a graph signal $\bx$ defined over an undirected graph has been defined in \cite{Shuman2013,Pesenson2010,Sandryhaila2013,Pesenson2008,Rabbat2012,ortega2018graph} as $\widehat{\bx}=\bU^{\ast}\bx$
% \begin{align}
% \label{eq.graph_Fourier_transform}
%     \widehat{\bx}=\bU^{\ast}\bx,
% \end{align}
where $\bU$ is the unitary matrix whose columns are the Laplacian eigenvectors.
% \subsection{Generalized graph signal}
In \gls{GGSP}, a generalized graph signal $f$ is defined as a map from the vertex set $\calV$ to a separable Hilbert space $\calH$. In this work, we consider $\calH=L^{2}(\calT)$ (with the usual Lebesgue measure), where $\calT=\bbR$ represents the time domain.
% The space of signals is denoted by $S(\calG,\calH)$. Assuming $\calH=L^{2}(\Omega)$, where $\Omega$ is a measurable space, then for each $f \in S(\calG,\calH)$ and $v\in \calV$, $f(v)\in L^{2}(\Omega)$ and $f(v)(x)\in \Complex$ for each $x\in \Omega$.
% For a finite-dimensional Euclidean vector space $\Complex^{N}$, the tensor product $\Complex^{N}\otimes\calH$ is a Hilbert space with the inner product $\langle v_{1}\otimes h_{1}, v_{2}\otimes h_{2}\rangle_{\Complex^{n}\otimes\calH} = \langle v_{1}, v_{2}\rangle_{\Complex^{n}}\langle h_{1},h_{2}\rangle_{\calH}$. Suppose $\Phi=\{\phi_{i}\}_{1\leq i \leq n}$ is an orthogonal basis of $\Complex^{n}$ and $\Xi = \{\xi_{j}\}_{j\geq1}$ is orthogonal basis of $\calH$, then $\Phi\otimes \Xi=\{\phi_{i}\otimes \xi_{j}\}_{1\leq i\leq n, j\geq 1}$ forms an orthonormal basis of $\Complex^{n}\otimes \calH$.
% The joint Fourier transform \cite{JiTay:J19} is defined as $\calF_{f}(\phi\otimes\xi)=  \langle f, \phi\otimes \xi\rangle_{\Complex^{n}\otimes\calH}$.
Therefore, a generalized graph signal $f$ can be identified with the map $\calV\times \calT \rightarrow \Real$, defined by $(v,t)\mapsto f(v, t)$. In this paper, the space of generalized graph signals is denoted by $L^{2}\left(\calV\times\calT\right)$, where $\calV$ and $\calT$ are referred as the \emph{vertex} and \emph{time} domains, respectively. For $f\in L^2(\calV\times \calT)$, the joint Fourier transform is defined as \cite{JiTay:J19}:
\begin{align}
\label{eq.joint_Fourier_transform}
\mathcal{F}_{f}(\lambda_k,\omega)= \langle f, \bu_{k}\otimes e^{j\omega t}\rangle,
\end{align}
where $\lambda_k$ and $\bu_{k}$ are the $k$-th eigenvalue and eigenvector of $\bL$, respectively, and $\otimes$ is the tensor product. This transform maps the signal $f$ in $L^2(\calV\times\calT)$ having the joint vertex-time domain $\calV\times\calT$ to a signal in $L^2(\calW\times\Omega)$ with the joint spectral-frequency domain $\calW\times\Omega$, where $\calW = \set{\lambda_1,\dots,\lambda_N}$ and $\Omega=\bbR$ are referred to as the \emph{spectral} and \emph{frequency} domains, respectively. It generalizes both \gls{GFT} and the classical Fourier transform (FT). Specifically, it reduces to the \gls{GFT}, which maps a graph signal from $\calV$ to $\calW$, when $f$ is evaluated at a fixed time instance, and to the FT mapping a temperal signal from $\calT$ to $\Omega$ when $\calV$ is a singleton.
\section{Uncentainty Principle for Generalized Graph Signals}
\label{sec:localized properties}
% In this section, we establish an uncertainty principle for generalized graph signals by defining joint vertex-time and spectral-frequency spreads. These metrics quantify the localization of signals in their respective domains. We then derive an uncertainty principle that highlights the trade-off between these spreads, generalizing the uncertainty principles for graph and temporal signals as special cases
In this section, we first define the notions of joint vertex-time and spectral-frequency spreads, which quantify signal localization in $\calV\times\calT$ and $\calW\times\Omega$. We then derive the uncertainty principle for generalized graph signals, demonstrating the trade-off between these spreads. This principle extends the existing uncertainty principles for graph and temporal signals as special cases.

\subsection{Joint vertex-time and spectral-frequency spread}
We first define the spread in $\calV\times\calT$ and $\calW\times\Omega$ domains, measuring signals' localization in these domains, respectively.

\begin{Definition}\label{def:vt-limiting}
(Joint vertex-time limiting operator) Given a subset $\calS \subseteq \calV\times \calT$ and a generalized graph signal $f\in L^{2}(\calV\times \calT)$, a joint vertex-time limiting operator satisfies
\begin{align}\label{eq.joint_VT_opt}
\Pi_{\calS} f:=
\begin{cases}
f(v,t) &\mathrm{if}~(v,t)\in~\calS,\\
0  &\mathrm{otherwise}.
\end{cases}
\end{align}
$\Pi_{\calS}$ is a Hermitian operator, and satisfies $\Pi_{\calS}^2=\Pi_{\calS}$.
\end{Definition}

\begin{Remark}
The joint vertex-time limiting operator $\Pi_{\calS}$ generalizes both the vertex-limiting operator in the graph domain and the time-limiting operator in time domain:
\begin{enumerate}[i)]
\item When $\calH=\Complex$, $f\in L^{2}(\calV\times\calT)$ reduces to a graph signal $f(v)\in \Complex$ for each $v\in \calV$. In this case, $\Pi_{\calS}$ reduces to the vertex-limiting operator $\Pi_{\calV'}$ \cite{Tsitsvero2016}, as defined by a diagonal matrix $\bD_{\calV'}=\diag\set{\mathbf{1}_{\calV'}}$ for a vertex subset $\calV'\subseteq\calV$.
\item For a trivial graph $\calG$ with a single vertex, $f\in L^{2}(\calV\times \calT)$ becomes a continuous-time signal $f(t)$ over $\calT$. Let $\calT'$ be an interval within $\calT$. The operator $\Pi_{\calS}$ reduces to the time-limiting operator $\Pi_{\calT'}$ \cite{Slepian1961}, which restricts the signal $f(t)$ to $\calT' = [t_{c} - \ell/2, t_{c} + \ell/2]$, where $t_{c}$ is the center time point and $\ell$ is the length of $\calT'$.
% \begin{align}
% \label{eq.time-limit operator}
% \Pi_{\calT'}f:=\left\{
% \begin{aligned}
% &f(t)  &\mathrm{if}~t\in \calT',\\
% &0  &\mathrm{if}~t \notin \calT',
% \end{aligned}
% \right.
% \end{align}
\end{enumerate}
\end{Remark}

\begin{Definition}\label{def:sf-limiting}
(Joint spectral-frequency limiting operator) Given a unitary transform $\calU: L^{2}(\calV\times \calT) \rightarrow L^{2}(\calW\times \Omega)$ and a subset $\Sigma \subseteq \calW \times \Omega$, the joint spectral-frequency limiting operator $\Pi_{\Sigma}$ for a generalized graph signal $f\in L^2(\calV\times\calT)$ is defined as $\Pi_{\Sigma} f = \calU^{-1}\Sigma\{\calU f\}$,
% \begin{align}
% \label{joint_fre_opt}
% \Pi_{\Sigma} f = \calU^{-1}\Sigma\{\calU f\},
% \end{align}
where\footnote{We omit $\calU$ in the notation $\Pi_\Sigma$ to avoid clutter as it will be clear from the context.}
\begin{align}
\Sigma\{\calU f\}:
= \begin{cases}
\left(\calU f\right)(w,\omega) &\mathrm{if}~(w,\omega)\in~\Sigma,\\
0  &\mathrm{otherwise}.
\end{cases}
\end{align}
% The operator $\Pi_{\Sigma}$ produces the joint band-limited version of $f(v,t)$, and is Hermitian, satisfying $(\Pi_{\Sigma})^2=\Pi_{\Sigma}$.
$\Pi_{\Sigma}$ is a Hermitian operator, satisfying $\Pi_{\Sigma}^2=\Pi_{\Sigma}$.
\end{Definition}

A typical unitary transform $\calU$ mapping from $L^{2}(\calV\times \calT)$ to $L^{2}(\calW\times \Omega)$ is the \gls{JFT}, as defined in \cref{eq.joint_Fourier_transform}.
% The joint band-limited version of $f(v, t)$ is expressed as
% \begin{align}
% \Pi_{\Sigma} f=\sum_{k\in\calW'}\int_{\omega\in\Omega'}\langle f, \bu_{k}\otimes e^{j\omega t} \rangle ~\bu_{k}\otimes e^{j\omega t} d\omega
% \end{align}
% where $\bu_{k}$ is $k$-th eigenvector of $\bL$ and $\calW'$ is a subset of spectral indices $\calW$.
The joint spectral-frequency limiting operator $\Pi_{\Sigma}$ includes the following special cases:
% both graph spectral and frequency limiting operators as special cases:
\begin{enumerate} [i)]
    \item When $\calH=\Complex$, $f\in L^{2}(\calV\times\calT)$ becomes a graph signal. Then $\calU$ is \gls{GFT}, and $\Pi_{\Sigma}$ reduces to the spectral-limiting $\Pi_{\calW'}=\bU\Sigma_{\calW'}\bU^{\ast}$ with $\Sigma_{\calW'}=\diag\{\bf{1}_{\calW'}\}$, projecting $f$ onto the subspace spanned by the eigenvectors indexed by $\calW'$ \cite{Tsitsvero2016}.
    % , as $\hat{f}=\bU^{\ast}f$ where $\bU$ is the matrix of Laplacian eigenvectors.
    \item  For a trivial graph $\calG$ with a single vertex, $f\in L^{2}(\calV\times\calT)$ is a continuous-time function $f(t)$. In this case, $\calU$ becomes FT, and $\Pi_{\Sigma}$ reduces to the frequency-limiting operator $\Pi_{\Omega'}$ \cite{Slepian1961} which yields a bandlimited function $ \Pi_{\Omega'}f$ within the frequency interval $\Omega'= [\omega_{c}-W, \omega_{c}+W]$, where $\omega_{c}$ is the center frequency and $W$ denotes the bandwidth.

    % \begin{align}
    %     \label{frequency-limit operator}
    %         \Pi_{\Omega'}f = \frac{1}{2\pi}\int_{\omega_{0}-W}^{\omega_{0}+W} \calF (\omega) e^{j\omega t} \ud \omega.
    % \end{align}
    % forming the Paley-Wiener space \cite{Iosevich2015}.
\end{enumerate}

The joint vertex-time limiting operator $\Pi_{\calS}$ and the joint spectral-frequency limiting operator $\Pi_{\Sigma}$ are orthogonal projections that map a generalized graph signal $f$ into the joint vertex-time limiting and joint spectral-frequency limiting spaces, respectively. The image of $\Pi_{\calS}$, denoted as $\ima(\Pi_{\calS})$, consists of signals satisfying
\begin{align}
\label{eq.vertex_time_limited_signals}
    \Pi_{\calS}f=f,
\end{align}
indicating perfect localization over the subset $\calS$. Similarly, $\ima(\Pi_{\Sigma})$ contains signals for which
\begin{align}
\label{eq.spec_fre_limited_signals}
    \Pi_{\Sigma}f=f,
\end{align}
signifying perfect localization over the spectral-frequency subset $\Sigma$, referred to as bandlimited signals.
% , referred to as bandlimited signals \blue{[may simply use ``bandlimited'' instead of ``bandlimited'']}.
For a subset $\calS$, its complement $\bar\calS$ is defined such that $\calS\cup\bar\calS=\calV\times\calT$ and $\calS\cap\bar\calS=\emptyset$. The projection onto $\bar\calS$ is denoted by $\overline{\Pi}_{\calS}$, and the projection onto the complementary set $\bar\Sigma$ is denoted by $\overline{\Pi}_{\Sigma}$. Both $\overline{\Pi}_{\calS}$ and $\overline{\Pi}_{\Sigma}$ are Hermitian, satisfying $\overline{\Pi}_{\calS}^2=\overline{\Pi}_{\calS}$ and $\overline{\Pi}_{\Sigma}^2=\overline{\Pi}_{\Sigma}$. Additionally, the identities $\Pi_{\calS} f+\overline{\Pi}_{\calS} f=f$ and $\Pi_{\Sigma} f+\overline{\Pi}_{\Sigma} f=f$ hold.
% We will now define the joint vertex-time and joint spectral-frequency spreads using these projection operators.

\begin{Definition}
(Joint vertex-time spread and joint spectral-frequency spread) For a generalized graph signal $f\in L^{2}(\calV\times \calT)$, the joint vertex-time limiting operator $\Pi_{\calS}$ and the joint spectral-frequency limiting operator $\Pi_{\Sigma}$ project $f$ onto the subsets $\calS$ and $\Sigma$, respectively. The joint vertex-time spread $\alpha_{\calS}^2$ and joint spectral-frequency spread $ \beta^2_{\Sigma}$ are defined as
\begin{align}
    \alpha^2_{\calS} = \frac{\norm{\Pi_{\calS} f}_{2}^{2}}{\norm{f}_{2}^{2}}=\frac{\norm{f}^2_{L^{2}(\calS)}}{\norm{f}^2_{L^{2}(\cal{V}\times \cal{T})}}
\label{eq.joint_vT_spread}
\end{align}
and
\begin{align}
    \beta^2_{\Sigma} = \frac{\norm{\Pi_{\Sigma} f}_{2}^{2}}{\norm{f}_{2}^{2}}=\frac{\norm{\Sigma\{\calU f\}}_{2}^{2}}{\norm{\calU f}_{2}^{2}}=\frac{\norm{\calU f}_{L^{2}(\Sigma)}^{2}}{\norm{\calU f}^{2}_{L^{2}(\calW\times \Omega)}}.
\label{eq.joint_SpFr_spread}
\end{align}
These spreads measure the energy concentration of $f$ within the subsets $\calS$ and $\Sigma$.
\end{Definition}

\subsection{Uncertainty principle}
\label{ssec:subhead}
Given a joint vertex-time subset $\calS$ and a joint spectral-frequency subset $\Sigma$, with corresponding energy concentrations $\alpha^2_{\calS}$ and $\beta^2_{\Sigma}$, an uncertainty principle for generalized graph signals characterizes the trade-off between $\alpha_{\calS}$ and $\beta_{\Sigma}$ and identifies the signals that can achieve all admissible pairs. This trade-off is presented in the following theorem.

\begin{Theorem}
\label{thm:UP_GGSP}
The feasible region of all possible pairs of $(\alpha_{\calS}, \beta_{\Sigma})$ is given by
\begin{align}
\label{eq.Theorem_feasible region}
\begin{aligned}
&\Theta = \Bigg\{ (\alpha_{\calS},\beta_{\Sigma}):\\
&\cos^{-1} \alpha_{\calS} + \cos^{-1} \beta_{\Sigma} \geq \cos^{-1}  \sqrt{\lambda_{\max}(\Pi_{\Sigma}\Pi_{\calS}\Pi_{\Sigma})},\\
&\cos^{-1}  \sqrt{1 - \alpha_{\calS}^2}+\cos^{-1} \beta_{\Sigma} \geq \cos^{-1}  \sqrt{\lambda_{\max}(\Pi_{\Sigma}\overline{\Pi}_{\calS}\Pi_{\Sigma})},\\
&\cos^{-1}  \alpha_{\calS} + \cos^{-1}  \sqrt{1 - \beta_{\Sigma}^2} \geq \cos^{-1}  \sqrt{\lambda_{\max}(\overline{\Pi}_{\Sigma}\Pi_{\calS}\overline{\Pi}_{\Sigma})}\\
&\cos^{-1}  \sqrt{1 - \alpha_{\calS}^2} + \cos^{-1}  \sqrt{1 - \beta_{\Sigma}^2} \\
& \qquad\qquad \geq \cos^{-1}  \sqrt{\lambda_{\max}(\overline{\Pi}_{\Sigma}\overline{\Pi}_{\calS}\overline{\Pi}_{\Sigma})}.
\Bigg\}
\end{aligned}
\end{align}
where $\lambda_{\max}(\cdot)$ denotes the largest eigenvalue of its operator argument.
\end{Theorem}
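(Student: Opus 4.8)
The plan is to translate the four inequalities defining $\Theta$ into statements about angles between $f$ and the ranges of the two projections, and then recognize them as instances of the triangle inequality on projective space. Writing $\hat f = f/\norm{f}$, note that since $\Pi_{\calS}$ is an orthogonal projection, $\alpha_{\calS} = \norm{\Pi_{\calS} f}/\norm{f}$ is exactly the cosine of the angle $\theta_{\calS} := \angle(f, \ima(\Pi_{\calS}))$ between $f$ and the range of $\Pi_{\calS}$; likewise $\beta_{\Sigma} = \cos\theta_{\Sigma}$ with $\theta_{\Sigma} := \angle(f, \ima(\Pi_{\Sigma}))$, and both angles lie in $[0,\pi/2]$. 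Because $\overline{\Pi}_{\calS}$ projects onto $\ima(\Pi_{\calS})^{\perp}$, one has $\sqrt{1-\alpha_{\calS}^2} = \norm{\overline{\Pi}_{\calS} f}/\norm{f} = \cos\angle(f,\ima(\overline{\Pi}_{\calS}))$, and similarly for $\sqrt{1-\beta_{\Sigma}^2}$. On the right-hand sides I would use the identity $\sqrt{\lambda_{\max}(\Pi_{\Sigma}\Pi_{\calS}\Pi_{\Sigma})} = \norm{\Pi_{\calS}\Pi_{\Sigma}}$ (operator norm), which equals the cosine of the minimal (Friedrichs) angle $\Theta_{\calS\Sigma}$ between $\ima(\Pi_{\calS})$ and $\ima(\Pi_{\Sigma})$; the remaining three radicals are the cosines of the minimal angles between the corresponding pairs drawn from $\{\ima(\Pi_{\calS}),\ima(\overline{\Pi}_{\calS})\}$ and $\{\ima(\Pi_{\Sigma}),\ima(\overline{\Pi}_{\Sigma})\}$.

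For necessity (every achievable pair lies in $\Theta$), I would invoke that $d(x,y) = \cos^{-1}|\langle \hat x, \hat y\rangle|$ is the geodesic metric on projective space and hence obeys the triangle inequality. Fixing $f$, let $a$ and $b$ be the normalized projections of $\hat f$ onto $\ima(\Pi_{\calS})$ and $\ima(\Pi_{\Sigma})$. Then $d(\hat f,a) = \theta_{\calS}$ and $d(\hat f,b) = \theta_{\Sigma}$, while $d(a,b) \geq \Theta_{\calS\Sigma}$ because $a,b$ are unit vectors in the respective subspaces and $\Theta_{\calS\Sigma}$ is the minimal angle. The triangle inequality $d(a,b) \le d(a,\hat f) + d(\hat f,b)$ then yields $\theta_{\calS} + \theta_{\Sigma} \geq \Theta_{\calS\Sigma}$, which is the first inequality. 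Replacing $\Pi_{\calS}$ by $\overline{\Pi}_{\calS}$ and/or $\Pi_{\Sigma}$ by $\overline{\Pi}_{\Sigma}$ and repeating the identical argument produces the other three, since $\cos^{-1}\sqrt{1-\alpha_{\calS}^2}$ and $\cos^{-1}\sqrt{1-\beta_{\Sigma}^2}$ are precisely the angles to the complementary ranges.

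For sufficiency (every point of $\Theta$ is realized by some $f$), I would reduce to a two-dimensional problem via the structure theory of a pair of orthogonal projections (Halmos' two-subspaces theorem, in the direct-integral form appropriate to the infinite-dimensional $L^2(\calV\times\calT)$). This decomposes the space into the four joint eigenspaces of $(\Pi_{\calS},\Pi_{\Sigma})$, where the pair acts as $(1,1),(1,0),(0,1),(0,0)$, together with a generic part that is a direct sum or integral of two-dimensional blocks, on each of which $\Pi_{\calS}$ and $\Pi_{\Sigma}$ are rank-one projections meeting at a principal angle $\theta$. Within a block at the extremal angle, sweeping $\hat f$ around the unit circle traces the boundary curve $\theta_{\calS} + \theta_{\Sigma} = \Theta_{\calS\Sigma}$, and, using the complementary blocks and the trivial eigenspaces, the other three boundary pieces; normalized mixtures of such block-vectors then fill the interior of $\Theta$. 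I would finally verify that the extremal principal angles arising here are exactly the four arccosines on the right-hand side, so the constructed set coincides with $\Theta$.

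The main obstacle I anticipate is the sufficiency direction, and specifically the passage to the 2D blocks in infinite dimensions: unlike the finite graph-only case, $L^2(\calV\times\calT)$ is infinite-dimensional, so the joint decomposition of $(\Pi_{\calS},\Pi_{\Sigma})$ must be treated as a direct integral and $\lambda_{\max}$ read as the top of the spectrum, a supremum that need not be attained. Care is then needed to show the boundary of $\Theta$ is genuinely achieved (or approached) and that mixing across blocks realizes every interior point. By contrast, the necessity direction is a short and clean consequence of the triangle inequality for angles.
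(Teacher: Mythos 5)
Your proposal is correct, and it reaches the four inequalities by a genuinely different route from the paper. For necessity, the paper follows the classical Landau--Pollak computation: it decomposes $f=\lambda\,\Pi_{\calS}f+\gamma\,\Pi_{\Sigma}f+g$ with $g$ orthogonal to both ranges, writes out the system of inner products, eliminates $\lambda,\gamma,\angles{g,f}$, and lands on $(\beta_{\Sigma}-\alpha_{\calS}\cos\theta)^2\le(1-\alpha_{\calS}^2)\sin^2\theta$ with $\theta$ bounded below by the minimal angle --- which is precisely your projective triangle inequality, derived by hand. Your version compresses the paper's angle lemma, its minimal-angle theorem (proved there via Rayleigh--Ritz, and equivalent to your identity $\norm{\Pi_{\calS}\Pi_{\Sigma}}=\sqrt{\lambda_{\max}(\Pi_{\Sigma}\Pi_{\calS}\Pi_{\Sigma})}$), and the elimination computation into a few lines; the only additions needed are the degenerate cases $\Pi_{\calS}f=0$ or $\Pi_{\Sigma}f=0$, where the relevant inequality holds trivially because the left-hand side is at least $\pi/2$, and a terminological correction: $\norm{\Pi_{\calS}\Pi_{\Sigma}}$ is the cosine of the Dixmier (minimal) angle, not the Friedrichs angle, which excludes the intersection --- the Dixmier one is what your bound uses and what the theorem needs. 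For sufficiency, the paper is explicit only at one corner, exhibiting the extremizer $f=p\psi_{0}+q\,\Pi_{\calS}\psi_{0}$ with closed-form coefficients, and then disposes of the remaining boundary (``continuing by analogy'') and the interior (``combinations of left and right singular vectors'') in one sentence; your Halmos two-projections decomposition is a more systematic scaffold for exactly that step and, carried out, would make the paper's sketch rigorous. Your infinite-dimensional caveat is also well placed and is silently ignored by the paper: its equality construction presumes $\lambda_{\max}$ is attained by an eigenvector $\psi_{0}$, which holds when $\Pi_{\Sigma}\Pi_{\calS}\Pi_{\Sigma}$ is compact (as in the prolate spheroidal setting) but not for arbitrary $\calS\subseteq\calV\times\calT$, in which case the boundary of $\Theta$ is only approached and the stated region must be read as a closure. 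In short, you trade the paper's explicit extremal signals (which it later reuses when building the dictionary) for a shorter, more conceptual necessity argument and a cleaner framework for achievability.
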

\begin{proof}
See \cref{prf:thm:UP_GGSP}.
\end{proof}
% \begin{figure}
%     \centering
%     \includegraphics[width=0.5\textwidth, trim=0.1cm 0.2cm 0.1cm 0cm, clip]{Figures/Feasible_Region.pdf}

% \end{figure}
\begin{figure}
\centering
\begin{tikzpicture}[scale=4.3]

    % Draw the dashed lines
    \draw[dashed, dash pattern=on 2pt off 1pt] (0, 1) -- (1, 1);
    \draw[dashed, dash pattern=on 2pt off 1pt] (1, 0) -- (1, 1);
    % \draw[dashed] (0, 0) -- (1, 0);

    % Draw the axis
    \draw[->] (0, 0) -- (1.1, 0) node[right] {\scriptsize $\alpha_{\calS}^2$};
    \draw[->] (0, 0) -- (0, 1.1) node[above] {\scriptsize $\beta_{\Sigma}^2$};

    % Place the points with gradient fill from bottom-left to top-right
    \shade[shading=axis, left color=gray!5, right color=gray!150, shading angle=135]
    (0, 0.8) to[out=80, in=180] (0.25, 1) --
    (0.75, 1) to[out=-10, in=100] (1, 0.75) --
    (1, 0.25) to[out=-80, in=-5] (0.8, 0) --
    (0.2, 0) to[out=180, in=-80] (0, 0.2) --
    (0, 0.8);

    % Draw a smooth green curve connecting (0, 0.8) to (0.25, 1)
    \draw[color={rgb,255:red,0; green,100; blue,0}, thick, smooth] (0, 0.8) to[out=80, in=180] (0.25, 1);
    \draw[color={rgb,255:red,0; green,100; blue,0}, thick, smooth] (0.75, 1) to[out=-10, in=100] (1, 0.75);
    \draw[color={rgb,255:red,0; green,100; blue,0}, thick, smooth] (0, 0.2) to[out=-80, in=180] (0.2, 0);
    \draw[color={rgb,255:red,0; green,100; blue,0}, thick, smooth] (0.8, 0) to[out=-5, in=-80] (1, 0.25);
    \draw[color={rgb,255:red,0; green,100; blue,0}, thick] (0.25, 1) -- (0.75, 1);
    \draw[color={rgb,255:red,0; green,100; blue,0}, thick] (0, 0.2) -- (0, 0.8);
    \draw[color={rgb,255:red,0; green,100; blue,0}, thick] (0.2, 0) -- (0.8, 0);
    \draw[color={rgb,255:red,0; green,100; blue,0}, thick] (1, 0.25) -- (1, 0.75);

    \filldraw[fill=gray, draw=black] (0, 0.8) circle (0.3pt) node[left,font=\tiny] {$\lambda_{\max} (\overline{\Pi}_{ \calS} \Pi_{ \Sigma} \overline{\Pi}_{ \calS} )$};
    \filldraw[fill=gray, draw=black] (0, 0.2) circle (0.3pt) node[left,font=\tiny] {$1  -  \lambda_{\max} (\overline{\Pi}_{ \calS} \overline{\Pi}_{ \Sigma} \overline{\Pi}_{ \calS})$};
    \filldraw[fill=gray, draw=black] (0.2, 0) circle (0.3pt) node[below,font=\tiny] {$1  -  \lambda_{\max} (\overline{\Pi}_{ \Sigma} \overline{\Pi}_{ \calS} \overline{\Pi}_{ \Sigma})$};
    \filldraw[fill=gray, draw=black] (0.8, 0) circle (0.3pt) node[below left,xshift=2.1em,font=\tiny] {$\lambda_{\max} (\overline{\Pi}_{ \Sigma} \Pi_{ \calS} \overline{\Pi}_{ \Sigma} )$};\
    \filldraw[fill=gray, draw=black] (0.25, 1) circle (0.3pt) node[above right,xshift=-2.2em,font=\tiny] {$1  -  \lambda_{\max} (\Pi_{ \Sigma} \overline{\Pi}_{ \calS} \Pi_{ \Sigma})$};
    \filldraw[fill=gray, draw=black] (0.75, 1) circle (0.3pt) node[above right,xshift=-1.1em, font=\tiny] {$\lambda_{\max} (\Pi_{ \Sigma} \Pi_{ \calS} \Pi_{ \Sigma})$};
    \filldraw[fill=gray, draw=black] (1, 0.75) circle (0.3pt) node[right,font=\tiny] {$\lambda_{\max} (\Pi_{ \calS} \Pi_{ \Sigma} \Pi_{ \calS} )$};
    \filldraw[fill=gray, draw=black] (1, 0.25) circle (0.3pt) node[right,font=\tiny] {$1-\lambda_{\max} ( \Pi_{ \calS} \overline{\Pi}_{ \Sigma} \Pi_{ \calS} )$};

    \draw[dashed] (0, 0.8) -- (1, 0.8);
    \draw[dashed] (0, 0.75) -- (1, 0.75);
    \draw[dashed] (0, 0.2) -- (1, 0.2);
    \draw[dashed] (0, 0.25) -- (1, 0.25);
    \draw[dashed] (0.2, 0) -- (0.2, 1);
    \draw[dashed] (0.8, 0) -- (0.8, 1);
    % Label the points on the boundary
    % Place the labels without circles
    \node[below] at (1, 0) {\scriptsize $1$};
    \node[left] at (0, 1) {\scriptsize $1$};
\end{tikzpicture}
\caption{Feasible region $\Theta$ of unit norm generalized graph signals $f$ with $\norm{\Pi_{\calS} f}_{2}=\alpha_{\calS}$ and $\norm{\Pi_{\Sigma} f}_{2}=\beta_{\Sigma}$}
\label{fig:feasible region}
\end{figure}
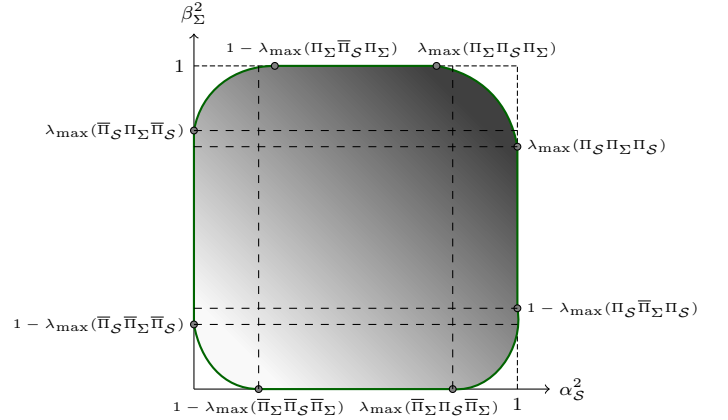

\Cref{fig:feasible region} illustrates the feasible region $\Theta$. The boundary curves at the four corners of $\Theta$ are derived from the equality conditions in \cref{eq.Theorem_feasible region}.
% The upper right corner, corresponding to pairs $(\alpha_{\calS}, \beta_{\Sigma})$ that achieve maximal concentration in both joint vertex-time and joint spectral-frequency domains, is defined by
% \begin{align}
% \label{eq.right_upper_curve}
% \cos^{-1}\!\alpha_{\calS} + \cos^{-1}\!\beta_{\Sigma} = \cos^{-1}\!\sqrt{\lambda_{\max}(\Pi_{\Sigma} \Pi_{\calS} \Pi_{\Sigma})}.
% % = \cos^{-1} \lambda_{\max}(\calD \Pi_{\Sigma}).
% \end{align}
As $\calS$ expands for a given $\Sigma$, the upper boundary curve progressively approaches the upper right corner of $\Theta$. This convergence occurs when the projections $\Pi_{\calS}$ and $\Pi_{\Sigma}$ satisfy the conditions for perfect localization, as detailed in \cref{Sect.Localization properties}.

\section{Joint vertex-time dictionary for generalized graph signal reconstruction}
\label{sec:print}
% Leveraging the uncertainty principle for generalized graph signals, we construct a dictionary of signals maximally concentrated in joint vertex-time domains. We first identify a class of generalized graph signals that achieve maximal concentration in these domains, forming the fundamental atoms of the dictionary. We then derive the conditions for perfect localization in \emph{both} joint vertex-time and dual domains. Finally, we outline the process of constructing and optimizing the dictionary for the reconstruction of generalized graph signals.
Using the uncertainty principle, we construct a dictionary of signals that are maximally concentrated in the joint vertex-time and spectral-frequency domains. Fundamental atoms are identified as those signals with maximal concentration in these domains. We derive conditions for perfect localization in the dual domains of joint vertex-time and spectral-frequency and outline the optimization process for reconstructing signals using this dictionary.

\subsection{Localization properties}
\label{Sect.Localization properties}
We present the conditions under which a generalized graph signal can achieve perfect localization in the dual domains of vertex-time and spectral-frequency simultaneously in the following theorem.

\begin{Theorem}
\label{thm:localized_both_domains}
A nonzero generalized graph signal $f$ can be perfectly localized over both the joint vertex-time subset $\calS$ and spectral-frequency subset $\Sigma$ (i.e., $f\in \ima(\Pi_{\calS}) \cap \ima(\Pi_{\Sigma})$) if and only if $f$ is an eigenvector of the operator $\Pi_{\Sigma}\Pi_{\calS}\Pi_{\Sigma}$ (or equivalently, $\Pi_{\calS}\Pi_{\Sigma}\Pi_{\calS}$) with an eigenvalue of $1$, i.e.,
\begin{align}
\label{joint_localizd_conditions}
    \Pi_{\Sigma}\Pi_{\calS}\Pi_{\Sigma} f = f.
\end{align}
\begin{proof}
See \cref{prf:thm:localized_both_domains}.
\end{proof}
\end{Theorem}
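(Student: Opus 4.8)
The plan is to prove the stated equivalence in both directions, exploiting throughout that $\Pi_{\calS}$ and $\Pi_{\Sigma}$ are orthogonal projections (Hermitian and idempotent, as recorded in \cref{def:vt-limiting,def:sf-limiting}). The forward implication is immediate: if $f\in\ima(\Pi_{\calS})\cap\ima(\Pi_{\Sigma})$, then $\Pi_{\calS}f=f$ and $\Pi_{\Sigma}f=f$ by \cref{eq.vertex_time_limited_signals,eq.spec_fre_limited_signals}, so substituting directly gives $\Pi_{\Sigma}\Pi_{\calS}\Pi_{\Sigma}f=\Pi_{\Sigma}\Pi_{\calS}f=\Pi_{\Sigma}f=f$, which is \cref{joint_localizd_conditions}.

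The substance is in the converse. First I would record that $\Pi_{\Sigma}\Pi_{\calS}\Pi_{\Sigma}$ is positive semidefinite with operator norm at most $1$: for any $g$, writing $a=\Pi_{\Sigma}g$ and using $\Pi_{\calS}^2=\Pi_{\calS}=\Pi_{\calS}^{\ast}$ gives $\langle \Pi_{\Sigma}\Pi_{\calS}\Pi_{\Sigma}g,g\rangle = \langle\Pi_{\calS}a,a\rangle=\norm{\Pi_{\calS}a}_2^2\ge 0$, while each projection is norm-nonincreasing, so the eigenvalue $1$ in \cref{joint_localizd_conditions} is the largest attainable. The engine of the proof is the elementary fact that for an orthogonal projection $P$ and any vector $x$ one has $\norm{x}_2^2=\norm{Px}_2^2+\norm{(I-P)x}_2^2$, hence $\norm{Px}_2\le\norm{x}_2$ with equality if and only if $Px=x$. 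Starting from $\Pi_{\Sigma}\Pi_{\calS}\Pi_{\Sigma}f=f$, I would compute $\norm{f}_2^2=\langle\Pi_{\Sigma}\Pi_{\calS}\Pi_{\Sigma}f,f\rangle=\norm{\Pi_{\calS}\Pi_{\Sigma}f}_2^2$, so the chain $\norm{f}_2=\norm{\Pi_{\calS}\Pi_{\Sigma}f}_2\le\norm{\Pi_{\Sigma}f}_2\le\norm{f}_2$ must hold with equality throughout.

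The two equalities are then read off using the projection fact: $\norm{\Pi_{\Sigma}f}_2=\norm{f}_2$ forces $\Pi_{\Sigma}f=f$ (so $f\in\ima(\Pi_{\Sigma})$), and $\norm{\Pi_{\calS}\Pi_{\Sigma}f}_2=\norm{\Pi_{\Sigma}f}_2$ forces $\Pi_{\calS}(\Pi_{\Sigma}f)=\Pi_{\Sigma}f$; combining these yields $\Pi_{\calS}f=f$ (so $f\in\ima(\Pi_{\calS})$), which is the desired conclusion. The parenthetical equivalence with $\Pi_{\calS}\Pi_{\Sigma}\Pi_{\calS}$ then follows by re-running the identical argument with the roles of $\calS$ and $\Sigma$ interchanged; since both eigenvalue-$1$ conditions are shown equivalent to the single membership statement $f\in\ima(\Pi_{\calS})\cap\ima(\Pi_{\Sigma})$, they are equivalent to each other.

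I expect the main obstacle to be conceptual rather than computational: recognizing that the right move is to convert the operator eigen-equation into the scalar energy identity $\norm{f}_2^2=\norm{\Pi_{\calS}\Pi_{\Sigma}f}_2^2$ and then force the sandwiched norm inequalities to be tight. Care must also be taken that the argument invokes only boundedness of the projections and the orthogonal decomposition $\norm{x}_2^2=\norm{Px}_2^2+\norm{(I-P)x}_2^2$, so that it remains valid in the infinite-dimensional setting $\calH=L^2(\calT)$, where $\Pi_{\Sigma}\Pi_{\calS}\Pi_{\Sigma}$ need not have purely discrete spectrum; since the claim characterizes only its eigenspace at $1$, no compactness or spectral-decomposition hypothesis is required.
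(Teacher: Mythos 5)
Your proof is correct, and in the converse direction it takes a cleaner, slightly different route than the paper. The paper first extracts $\Pi_{\Sigma}f=f$ purely algebraically (left-multiplying $\Pi_{\Sigma}\Pi_{\calS}\Pi_{\Sigma}f=f$ by $\Pi_{\Sigma}$ and using idempotence), and then invokes the Rayleigh--Ritz theorem on the quadratic form $\angles{f,\Pi_{\Sigma}\Pi_{\calS}\Pi_{\Sigma}f}=\angles{f,\Pi_{\calS}f}$ to conclude $\Pi_{\calS}f=f$. You instead convert the eigen-equation once into the energy identity $\norm{f}_2^2=\norm{\Pi_{\calS}\Pi_{\Sigma}f}_2^2$ and squeeze the chain $\norm{f}_2=\norm{\Pi_{\calS}\Pi_{\Sigma}f}_2\leq\norm{\Pi_{\Sigma}f}_2\leq\norm{f}_2$, reading off both fixed-point conditions from the tightness of projection norm inequalities. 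The two arguments share the same underlying mechanism (norm equality under an orthogonal projection forces a fixed point), but yours buys two things: it is uniform, obtaining $\Pi_{\Sigma}f=f$ and $\Pi_{\calS}f=f$ from a single inequality chain, and it sidesteps Rayleigh--Ritz entirely --- a genuine gain in rigor here, since the cited Rayleigh--Ritz theorem is a matrix result, whereas $\calH=L^{2}(\calT)$ makes $\Pi_{\Sigma}\Pi_{\calS}\Pi_{\Sigma}$ an operator on an infinite-dimensional space where a supremum of the Rayleigh quotient need not be attained; your closing remark correctly observes that only boundedness and the orthogonal decomposition $\norm{x}_2^2=\norm{Px}_2^2+\norm{(I-P)x}_2^2$ are needed. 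You also explicitly dispatch the parenthetical equivalence with $\Pi_{\calS}\Pi_{\Sigma}\Pi_{\calS}$ by symmetry, which the paper asserts but does not argue.
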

\begin{Remark}

To give insights into the perfect localization conditions in \cref{thm:localized_both_domains}, we examine three special cases:
\begin{enumerate}[i)]
    \item  When $\calH = \Complex$, the signal $f \in L^{2}(\calV \times \calT)$ is a graph signal with $f(v) \in \Complex$ for each $v \in \calV$. If $f$ is an eigenvector of $\Pi_{\calW'}\Pi_{\calV'}\Pi_{\calW'}$ associated with a unit eigenvalue, then $f$ achieves perfect localization in both vertex and its dual domains, consistent with Theorem 2.1 of \cite{Tsitsvero2016}.
    \item  When the graph $\calG$ consists of a single vertex and $\calT = \Real$, $f\in L^{2}(\calV\times\calT)$ becomes a continuous-time function in the Hilbert space  $L^{2}(\Real)$. In this case, the operators $\Pi_{\Sigma}$ and $\Pi_{\calS}$ reduce to the frequency-limiting operator $\Pi_{\Omega'}$ and the time-limiting operator $\Pi_{\calT'}$, respectively. The condition in \cref{joint_localizd_conditions} fails to hold because
    % \begin{align}
    % \lambda f(t)\!=\!\Pi_{\Omega'}\Pi_{\calT'}f\!=\!\frac{1}{\pi}\!\!\int_{\calT'}\!\!f(s)\frac{\sin \Omega'(t\!-\!s)}{t\!-\!s}\ud s
    % \end{align}
    the largest eigenvalue of $\Pi_{\Omega'}\Pi_{\calT'}\Pi_{\Omega'}$ is less than $1$ \cite{Pollak1961}, indicating that no non-zero signal can have both finite time duration and finite frequency bandwidth.
    \item Consider two subsets $\calS=\calV'\times \calT'$ and $\Sigma = \calW'\times \Omega'$. The operators $\Pi_{\calS}$ and $\Pi_{\Sigma}$ can be written as $\Pi_{\calS}=\Pi_{\calV'}\otimes\Pi_{\calT'}$ and $\Pi_{\Sigma}=\Pi_{\calW'}\otimes\Pi_{\Omega'}$, respectively. In this setup, a generalized graph signal $f$ cannot be perfectly localized over both $\calS$ and $\Sigma$ simultaneously, because the largest eigenvalue of $\Pi_{\Sigma}\Pi_{\calS}\Pi_{\Sigma}$, being the product of the largest eigenvalues of $\Pi_{\calW'}\Pi_{\calV'}\Pi_{\calW'}$ and $\Pi_{\Omega'}\Pi_{\calT'}\Pi_{\Omega'}$, is less than $1$. Thus, the perfect localization conditions in \cref{thm:localized_both_domains} cannot be satisfied.
    % \subseteq \calV\times \calT  \subseteq \calW\times \Omega
\end{enumerate}
\end{Remark}
In cases where signals are not perfectly localized in both $\calS$ and $\Sigma$, it is crucial to identify signals with limited support in one domain and maximal concentration in the other. We say that a generalized graph signal is bandlimited if it is localized in the spectral-frequency domain, i.e., it is bandlimited in the usual sense in both the spectral and frequency domains.
Our goal is to find an orthonormal set of bandlimited signals, satisfying $\Pi_{\Sigma}f = f$, that are maximally concentrated in the joint vertex-time domain $\calS$. This is achieved by solving an optimization problem for a set of signals ${\xi_i}$, $i = 1, \dots, N$, iteratively as follows:
\begin{align}
\begin{aligned}\label{eq.optimization_problem}
\max_{\xi_{i}}\ & \norm{\Pi_{\calS} \xi_{i}}\\
\ST\ & \norm{\xi_{i}} = 1,\\
& \Pi_{\Sigma}\xi_{i}=\xi_{i},\\
& \angles{\xi_{i},\xi_{j}}=0, \forall j < i.
\end{aligned}
\end{align}
% The signal $\xi_{1}$ represents the joint band-limited signal with the highest energy concentration in $\calS$, while $\xi_{2}$ is the next most concentrated, orthogonal to $\xi_1$, and so on.
The solution to \cref{eq.optimization_problem} is provided in the following theorem.

\begin{Theorem}
\label{Theo.solu_optimization}
Let $\set{\xi_{i}}_{i=1,2,\dots}$ be the solution of \cref{eq.optimization_problem}, each maximally concentrated over the vertex-time set $\calS$. Then, these signals are the eigenvectors of the operator $\Pi_{\Sigma}\Pi_{\calS}\Pi_{\Sigma}$, i.e.,
\begin{align}
 \Pi_{\Sigma}\Pi_{\calS}\Pi_{\Sigma} \xi_{i} = \lambda_{i} \xi_{i}
\end{align}
where the eigenvalues $\lambda_{1}\geq\lambda_{2}\geq\cdot\cdot\cdot$ form a non-increasing sequence. Moreover, these signals are orthogonal over $\calS$, i.e., $\angles{\xi_{i},\Pi_{\calS}\xi_{j}} = \lambda_{j}\delta_{ij},$ where $\delta_{ij}$ is the Kronecker delta.
\begin{proof}
See \cref{prf:thm:solu_optimization}.
\end{proof}
% \begin{align}
% \end{align}
% \begin{proof}
% Substituting the joint band-limiting constraint into the objective function in \cref{eq.optimization_problem}, we obtain
% \begin{align}
%     \begin{aligned}
%         \xi_{i} = &\argmax_{\xi_{i}}~\norm{\Pi_{S}\Pi_{\Sigma}\xi_{i}}\\
%         & \ST \norm{\xi_{i}} = 1, ~\angles{\xi_{i},\xi_{j}}=0,~j\neq i.
%     \end{aligned}
% \label{eq.optimization_problem_v2}
% \end{align}
% By the Rayleigh-Ritz theorem \cite{Horn1985}, the solutions are the eigenvectors of $(\Pi_{S}\Pi_{\Sigma})^{\ast}\Pi_{S}\Pi_{\Sigma}=\Pi_{\Sigma}\Pi_{S}\Pi_{\Sigma}$, i.e., $\Pi_{\Sigma}\Pi_{S}\Pi_{\Sigma} \xi_{i} = \lambda_{i} \xi_{i}$. Since $\Pi_{\Sigma}\xi_{i}=\xi_{i}$ and $\left(\Pi_{\Sigma}\right)^{\ast}=\Pi_{\Sigma}$, it follows that
% \begin{align}
%     \angles{\psi_{i},\Pi_{\Sigma}\Pi_{S}\Pi_{\Sigma}\psi_{j}}=\lambda_{j}\delta_{ij}.
%     % = \angles{\left(\Pi_{\Sigma}\right)^{\ast}\psi_{i},\Pi_{S}\Pi_{\Sigma}\psi_{j}} =  \angles{\psi_{i},\Pi_{S}\psi_{j}} = \lambda_{j}\delta_{ij}.
% \end{align}
% This completes the proof of Theorem \cref{Theo.solu_optimization}.
% \end{proof}
\end{Theorem}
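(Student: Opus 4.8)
The plan is to recognize the constrained maximization in \cref{eq.optimization_problem} as the variational (Rayleigh--Ritz) characterization of the eigenvectors of the Hermitian operator $A := \Pi_{\Sigma}\Pi_{\calS}\Pi_{\Sigma}$, and then read both conclusions off from the spectral theorem. First I would rewrite the objective on the feasible set. Since $\Pi_{\calS}$ is Hermitian and idempotent, $\norm{\Pi_{\calS}\xi_{i}}^{2} = \angles{\xi_{i},\Pi_{\calS}\xi_{i}}$. Imposing the bandlimiting constraint $\Pi_{\Sigma}\xi_{i}=\xi_{i}$ and using that $\Pi_{\Sigma}$ is also Hermitian and idempotent, I substitute $\xi_{i}=\Pi_{\Sigma}\xi_{i}$ into both arguments to obtain $\angles{\xi_{i},\Pi_{\calS}\xi_{i}} = \angles{\Pi_{\Sigma}\xi_{i},\Pi_{\calS}\Pi_{\Sigma}\xi_{i}} = \angles{\xi_{i}, A\xi_{i}}$. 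Thus, over unit-norm bandlimited signals, maximizing $\norm{\Pi_{\calS}\xi_{i}}$ is equivalent to maximizing the Rayleigh quotient of $A$.

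Next I would record the structure of $A$. Because $(\Pi_{\calS}\Pi_{\Sigma})^{\ast}=\Pi_{\Sigma}\Pi_{\calS}$ and the projections are idempotent, $A=\Pi_{\Sigma}\Pi_{\calS}\Pi_{\calS}\Pi_{\Sigma}=(\Pi_{\calS}\Pi_{\Sigma})^{\ast}(\Pi_{\calS}\Pi_{\Sigma})$, so $A$ is Hermitian and positive semidefinite with $0\le\angles{\xi,A\xi}\le\norm{\xi}^{2}$. By the spectral theorem, $A$ admits an orthonormal system of eigenvectors with real eigenvalues arranged as $\lambda_{1}\ge\lambda_{2}\ge\cdots\ge 0$. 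I would then apply the Courant--Fischer / deflation argument: for $i=1$ the maximizer of the Rayleigh quotient is the top eigenvector $\xi_{1}$ with value $\lambda_{1}=\lambda_{\max}(A)$; at stage $i$ the orthogonality constraints $\angles{\xi_{i},\xi_{j}}=0$ for $j<i$ restrict the admissible directions to the orthogonal complement of $\spn\set{\xi_{1},\dots,\xi_{i-1}}$, and the inductive variational principle identifies the maximizer as the next eigenvector $\xi_{i}$, giving $\Pi_{\Sigma}\Pi_{\calS}\Pi_{\Sigma}\xi_{i}=\lambda_{i}\xi_{i}$.

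I would then confirm that these eigenvectors are genuinely feasible and establish orthogonality over $\calS$. For feasibility, whenever $\lambda_{i}>0$ we have $\xi_{i}=\lambda_{i}^{-1}A\xi_{i}\in\ima(\Pi_{\Sigma})$ by the outer $\Pi_{\Sigma}$ factor in $A$, so $\Pi_{\Sigma}\xi_{i}=\xi_{i}$ holds automatically and the bandlimiting constraint is satisfied. Orthogonality over $\calS$ is then a direct computation: using $\xi_{j}=\Pi_{\Sigma}\xi_{j}$ and Hermiticity, $\angles{\xi_{i},\Pi_{\calS}\xi_{j}} = \angles{\Pi_{\Sigma}\xi_{i},\Pi_{\calS}\Pi_{\Sigma}\xi_{j}} = \angles{\xi_{i},A\xi_{j}} = \lambda_{j}\angles{\xi_{i},\xi_{j}} = \lambda_{j}\delta_{ij}$, which is the claimed identity.

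The main obstacle is justifying discreteness and attainment in the possibly infinite-dimensional space $L^{2}(\calV\times\calT)$: the Rayleigh--Ritz/deflation scheme requires that the supremum be attained and that the spectrum of $A$ be a non-increasing sequence accumulating only at $0$, which holds once $A$ (equivalently $\Pi_{\calS}\Pi_{\Sigma}$) is compact. This is the analogue of the Hilbert--Schmidt property of Slepian's time-and-band limiting operator, so the crux is arguing compactness of the joint vertex-time/spectral-frequency limiting operator; in the finite-dimensional graph case and the classical time-frequency case it is automatic.
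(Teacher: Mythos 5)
Your proof follows essentially the same route as the paper's: you substitute the bandlimiting constraint $\Pi_{\Sigma}\xi_{i}=\xi_{i}$ into the objective and invoke the Rayleigh--Ritz theorem to identify the maximizers as eigenvectors of $(\Pi_{\calS}\Pi_{\Sigma})^{\ast}\Pi_{\calS}\Pi_{\Sigma}=\Pi_{\Sigma}\Pi_{\calS}\Pi_{\Sigma}$, then establish orthogonality over $\calS$ by the same direct computation $\angles{\xi_{i},\Pi_{\calS}\xi_{j}}=\angles{\xi_{i},\Pi_{\Sigma}\Pi_{\calS}\Pi_{\Sigma}\xi_{j}}=\lambda_{j}\delta_{ij}$. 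Your added refinements---the explicit deflation argument for the iterative constraints, the check that eigenvectors with $\lambda_{i}>0$ automatically satisfy the bandlimiting constraint, and the observation that attainment and a discrete non-increasing spectrum in the infinite-dimensional setting rest on compactness of $\Pi_{\calS}\Pi_{\Sigma}$ (the analogue of the Hilbert--Schmidt property of Slepian's operator)---are all correct and make explicit what the paper's terse proof leaves implicit.
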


\begin{Remark}
\Cref{Theo.solu_optimization} offers a general framework for constructing a set of orthonormal bandlimited signals $\set{\xi_{i}}_{i\geq1}$, including both bandlimited signals in the graph domain and in the Hilbert space $L^2(\Real)$ as special cases.
\begin{enumerate} [i)]
    \item If $\calH=\Complex$, the set of orthonormal bandlimited signals corresponds to the set of orthonormal bandlimited vectors $\set{\phi_{i}}_{i=1,2,\dots,K}$, where $K=\rank(\Pi_{\calW'})$. These vectors are the eigenvectors of $\Pi_{\calW'}\Pi_{\calV'}\Pi_{\calW'}$, consistent with Theorem 2.3 in \cite{Tsitsvero2016}.

    \item When the graph $\calG$ consists of a single vertex, the orthonormal bandlimited signals are the \glspl{PSWF} $\set{\psi_{j}}_{j\geq1}$ \cite{Slepian1961,Laudau1962,Pollak1961}, derived from the eigenfunctions of $\Pi_{\Omega'}\Pi_{\calT'}\Pi_{\Omega'}$.

    \item When considering two subsets $\calS = \calV' \times \calT'$ and $\Sigma = \calW' \times \Omega'$, the vertex-time limiting operator $\Pi_{\calS}$ and the spectral-frequency limiting operator $\Pi_{\Sigma}$ are given by $\Pi_{\calS} = \Pi_{\calV'} \otimes \Pi_{\calT'}$ and $\Pi_{\Sigma} = \Pi_{\calW'} \otimes \Pi_{\Omega'}$, respectively. The orthonormal bandlimited signals are the tensor products of orthonormal bandlimited signals in the graph and Hilbert space $L^2(\Real)$, denoted as $\set{\xi_{i,j}}_{i,j\geq1} = \set{\phi_{i}\otimes\psi_{j}}_{i,j\geq1}$.
    % Here, $\phi_{i}$ represents the orthonormal band-limited vectors given by the eigenvectors of $\Pi_{\calW'}\Pi_{\calV'}\Pi_{\calW'}$ operator in graph domain, and $\psi_{j}$ represents the orthonormal bandlimited signals (prolate spheroidal wave functions) in Hilbert space $L^2(\Real)$.
\end{enumerate}
\end{Remark}

\Cref{Theo.solu_optimization} provides a set of perfectly bandlimited signals with maximal concentration in the joint vertex-time domain. Next, we construct a dictionary of these signals to facilitate the reconstruction of bandlimited generalized graph signals.

\subsection{Joint vertex-time dictionary for generalized graph signal reconstruction}

An overcomplete dictionary with richer atoms enhances signal reconstruction by enabling sparse, flexible, and robust representations. Motivated by sensor networks with damaged or intermittently operational sensors, we use multiple subsets $\calS^{(i)}$, $i=1,2,\dots I$, to cover the joint vertex-time domain. For computational feasibility, we consider only subsets of the form $\calS^{(i)}=\calV'^{(i)}\times\calT'^{(i)}$, where $\calT'^{(i)}=[t_{c}^{(i)}-\ell\tc{i}/2,t_{c}^{(i)}+\ell\tc{i}/2]$. Our objective is to learn a dictionary bandlimited to $\Sigma=\calW'\times\Omega'$ and maximally concentrated on a collection of the subsets $\calS\tc{i}$.

Let $\Phi(\calV'^{(i)})=\left(\phi_{k}(\cdot;\calV'^{(i)})\right)_{k\geq1}$, where for each $k\geq 1$, $\phi_{k}(\cdot;\calV'^{(i)}):\calV\rightarrow \Real$ is a graph signal localized over $\calV'^{(i)}$, be the eigenvectors of $\Pi_{\calW'}\Pi_{\calV'^{(i)}}\Pi_{\calW'}$. Let $\Psi(\calT'^{(i)})=\left(\psi_{n}(\cdot;\calT'^{(i)})\right)_{n\geq1}$, where for each $n\geq1$, $\psi_{n}(\cdot;\calT'^{(i)}): \calT \rightarrow \Real$ is a PSWF localized over $\calT'^{(i)}$, be the eigenvectors of $\Pi_{\Omega'}\Pi_{\calT'^{(i)}}\Pi_{\Omega'}$. We construct a dictionary candidate as
\begin{align}
\Xi(\bt_{c}, \boldell) = \parens*{\Phi(\calV'^{(i)})\otimes\Psi(\calT'^{(i)})}_{i=1}^I,
\end{align}
where
\begin{align*}
\Phi(\calV'^{(i)})\otimes\Psi(\calT'^{(i)})=\parens*{\phi_{k}(\cdot;{\calV'}\tc{i}) \otimes \psi_{n}(\cdot;{\calT'}\tc{i})}_{k,n\geq1}
\end{align*}
is the set of signals localized over $\calS^{(i)}$,
$\bt_c=(t_c\tc{i})_{i\geq1}$,
and $\boldell=(\ell\tc{i})_{i\geq1}$.
While dictionaries with an analytical form and predefined parameters are effective in capturing the global structure of a signal and facilitating its reconstruction, learned dictionaries with optimized parameters often yield superior performance in specific applications, as they are better adapted to the observed class of signals \cite{Elad2006,Mairal2009,Rencker2019}.
We optimize the parameters $\bt_{c}$ and $\boldell$ using training samples to learn the dictionary $\Xi(\bt_{c}, \boldell)$ for signal reconstruction as follows.

Given noisy samples of a generalized graph signal $f$ at a subset $\calM \subset \calV \times \calT$ of vertices and time instances, the goal is to recover the signal $f$.
Suppose the sampling set is $\calM = \set{(v_{m},t_{m}) \given m=1,\dots,M}$ and the noisy observations are given by $y_{m}=f(v_{m},t_{m})+\epsilon_{m}$, for $m = 1,\dots,M$, where $\epsilon_{m}$ are \gls{iid} zero-mean noise with variance $\sigma^{2}$.
We assume that the signal of interest is localized in a finite interval $[0,\delta]$ in practice. Denote the training samples $\bY_{\mathrm{train}}=\left(y_{m}\right)_{m\in\{1,\dots,M\}}$, and the dictionary $\Xi(\bt_{c}, \boldell)$ evaluated at the training samples as $\Xi(\calM_{\mathrm{train}};\bt_{c}, \boldell)$, where $\calM_{\mathrm{train}}$ denotes train instances. The dictionary learning for signal recovery problem is formulated as:
\begin{align}
\begin{aligned}\label{eq.dictionary_optimize}
\argmin_{\bx, \bt_{c}, \boldell}\ & \norm{\bY_{\mathrm{train}}-\Xi(\calM_{\mathrm{train}}; \bt_{c}, \boldell)\bx}_{2}^{2}+\mu\norm{\bx}_{1}\\
\ST\ &  t_{c}^{(i)} \in [0,\delta],\ 0 \leq\ell\tc{i}\leq \delta,\ \forall i.
\end{aligned}
\end{align}
Here, $\bx$ contains sparse coefficients, and $\mu$ is a regularization parameter.

The optimization problem \cref{eq.dictionary_optimize} is not jointly convex but is convex with respect to each variable when the others are fixed. We solve the problem by alternating minimization over $\bx$, $\bt_{c}$ and $\boldell$ until convergence. At the $u$-th iteration, the following minimizations are performed sequentially:
\begin{align*}
\bx^{u+1}&\leftarrow\argmin_{\bx^{u}} \norm{\bY_{\mathrm{train}}-\Xi(\calM_{\mathrm{train}};\bt_{c}^{u},\boldell^{u})\bx^{u}}_{2}^{2}+\mu\norm{\bx^{u}}_{1} \\
\bt_{c}^{u+1}&\leftarrow\argmin_{\bt_{c}^{u}}\norm{\bY_{\mathrm{train}}-\Xi(\calM_{\mathrm{train}};\bt_{c}^{u},\boldell^{u})\bx^{u+1}}_{2}^{2},\\
&\qquad\qquad\ST t_{c}^{(i)} \in [0,\delta],\ \forall i,\\
\boldell^{u+1} &\leftarrow \argmin_{\boldell^{u}} \norm{\bY_{\mathrm{train}}-\Xi(\calM_{\mathrm{train}};\bt_{c}^{u+1},\boldell^{u})\bx^{u+1}}_{2}^{2},\\
&\qquad\qquad\ST 0 \leq \ell\tc{i}\leq \delta,\ \forall i.
\end{align*}
The sparse coefficients $\bx$ are updated using Lasso regression\cite{Tibshirani1996}. After updating $\bx$, the dictionary is updated by optimizing $\bt_{c}$ and $\boldell$ using the gradient descent algorithm \cite{Lecun1998}. We summarize the complete procedure in \cref{algo:mini}, referred to as JECD, where Loss denotes the objective function in \cref{eq.dictionary_optimize}. Suppose the optimization solution is given by $(\widetilde{\bx}, \widetilde{\bt}_{c}, \widetilde{\boldell})$. With the optimized dictionary $\Xi(\widetilde{\bt}_{c}, \widetilde{\boldell})$, the reconstruction process is formulated as:
\begin{align}
    \label{eq.reconstruction}
        \argmin_{\bx} &\norm{\bY_{\mathrm{test}}-\Xi(\calM_{\mathrm{test}}; \widetilde{\bt}_{c}, \widetilde{\boldell})\bx}_{2}^{2}+\mu\norm{\bx}_{1}
\end{align}
where $\calM_{\mathrm{test}}$ denotes test instances and $\bY_{\mathrm{test}}$ the corresponding test samples. Suppose the optimal solution is given by $\bx^*$. The reconstructed generalized graph signal is given by $\widehat{f}(v,t)=\Xi(\{(v,t)\}; \widetilde{\bt}_{c}, \widetilde{\boldell})\bx^*$ for each $(v,t)\in\calV\times\calT$. The recovery performance is measured by the relative square error:
\begin{align}
    \label{recovery_error}
    \mathrm{RSE} = \frac{\sum_{(v,t)\in \calM_\mathrm{val}}\left(f(v,t)-\widehat{f}(v,t)\right)^2}{\sum_{(v,t)\in \calM_\mathrm{val}} f(v,t)^2},
\end{align}
where $\calM_\mathrm{val}$ denotes a validation set of sample points sampled distinctly from $\calM_{\mathrm{test}}$.

\begin{algorithm}[!htb]
	\caption{Joint energy concentrated dictionary (JECD) learning algorithm}\label{algo:mini}
	\begin{algorithmic}[1]
            \State Input: Training instances $\calM_{\mathrm{train}}$, training samples $\bY_{\mathrm{train}}=\left(y_{m}\right)_{m\in\{1,\dots,M\}}$, learning rate $\eta_{1}$ and $\eta_{2}$, convergence tolerance $\epsilon$ and signal length $\delta$.
            \State Output: Optimized $\widetilde{\bt_{c}}$ and $\widetilde{\boldell}$
		\State Initialize $\bt_{c}^{0}$, $\boldell^{0}$, $u=0$.
		\Repeat
            \State $\bx^{u} = \mathrm{Lasso}(\bY_{\mathrm{train}},\Xi(\calM_{\mathrm{train}}; \bt_{c}^{u}, \boldell^{u}))$
            \State $\bt_{c}^{u+1}=\bt_{c}^{u}-\eta_{1}\ppfrac{}{\bt_{c}^{u}}\norm{\bY_{\mathrm{train}}-\Xi(\calM_{\mathrm{train}};\bt_{c}^{u}, \boldell^{u})\bx^{u}}_{2}^{2}$
            \State $\boldell^{u+1}=\boldell^{u}-\eta_{2}\ppfrac{}{\boldell^{u}}\norm{\bY_{\mathrm{train}}-\Xi(\calM_{\mathrm{train}};\bt_{0}^{u}, \boldell^{u})\bx^{u}}_{2}^{2}$
            \If {$\bt_{c}^{u+1}\notin [0,\delta] $ \text{or} $\boldell^{u+1}> \delta$}
            \State $\bt_{c}^{u+1}=\max(0,\min(\bt_{c}^{u+1}, \delta))$
            \State $\boldell^{u+1}=\max(0,\min(\boldell^{u+1}, \delta))$
            \EndIf
            \State $u\leftarrow u+1$
		\Until{$\norm{\mathrm{Loss}^{u}-\mathrm{Loss}^{u-1}} \leq \epsilon$}
	\end{algorithmic}
 % \footnotetext{The term \mathrm{Loss} represents the objective function.}
\end{algorithm}

\section{Experimental results}
We test JECD on a traffic dataset\footnote{The data is from the 3rd district of California and can be download from \url{https://pems.dot.ca.gov/}.} that captures high-resolution daily vehicle flow over two weekdays. The California Department of Transportation provides detailed traffic flow measurements (number of vehicles per unit interval) from stations along highways in Sacramento. Due to computational constraints, we focus on data from 70 stations over two weekdays (Monday and Tuesday) during the period of April 1-6, 2016. Following the approach in \cite{Loukas2019}, we use the road connectivity network from OpenStreetMap.org to construct a time series for each highway segment. The traffic flow on each segment is set as a weighted average of all nearby stations, with adjustments made according to traffic direction.

% \cyan{We randomly select a percentage of $p_{o}$ of Monday’s records as training samples to optimize the joint vertex-time dictionary $\Xi$. The optimized dictionary is then used to reconstruct the remaining $1-p_{o}$ \red{[$p_o$ is taken from Monday. What does it mean to be the remaining $1-p_o$ of Tuesday?]}of Tuesday's records, using randomly selected samples from Tuesday's records with a ratio of $p_{o}$.}
We randomly select a proportion $p_o$ of Monday's records as training samples to optimize the joint vertex-time dictionary $\Xi$. Using the optimized dictionary, we reconstruct Tuesday's records with a randomly selected proportion $p_o$ of Tuesday’s samples, evaluating reconstruction performance on the remaining $1-p_o$.
% and the remaining $1-p_{o}$ from Tuesday for testing. The joint vertex-time dictionary $\Xi$ is optimized using a random selection of training samples from the training set, with a training ratio of $p_{o}$. The optimized dictionary is then used to reconstruct the remaining data points in the test set, using randomly selected samples from the test set with a ratio of $p_{o}$.}
We compare JECD against the following joint vertex-time dictionaries for reconstructing generalized graph signals:
% \red{[$N$ was the number of vertices. Are you assuming some bandlimit?]}
a) \textbf{\Gls{JFT}}: A joint vertex-time dictionary is constructed from a set of joint vertex-time Fourier bases $\set{\phi_{k}\otimes\psi_{l} \given k=1,\dots,K,\ l=1,\dots,L}$ \cite{Loukas2016}. Here, $\phi_{k}$ is the $k$-th orthonormal eigenvector of the graph Laplacian $\bL$, and $\psi_{l}$ is an element of the orthonormal Fourier basis in the time domain, given by $\set{\frac{1}{\sqrt{2\pi}}, \frac{1}{\sqrt{\pi}}\cos(lt), \frac{1}{\sqrt{\pi}}\sin(lt)}_{l\geq1}$. Both $K$ and $L$ are tunable hyperparameters.
b) \textbf{\Gls{STVFT}} \cite{Grassi2018}: We design the mother vertex-time kernel in the short time-vertex Fourier basis in a separable manner. In the vertex domain, following \cite{Grassi2018}, the localized graph basis is $\bh_{p,q} = (\bU h(\Lambda - \Lambda_{q}) \bU^{\ast}) \bdelta_{p}$, where $h(\cdot)$ is adopted as an Itersine kernel \cite{Perraudin2014}, given by $h(\Lambda) = \sin(0.5\pi\cos((\pi\Lambda)^2)$ with uniformly $Q$ translations $\Lambda_{q} = \diag(\frac{\lambda_{\max}}{Q}q)$ for $q = 1, \ldots, Q$. Here, $\bU$ is the unitary matrix whose columns are the Laplacian eigenvectors, $\Lambda$ is a diagonal matrix with non-negative real eigenvalues $\set{\lambda_{i}}_{i=1,\dots,N}$, and $\bdelta_p$ is a Kronecker delta centered at vertex $p$.
In the time domain, we use Gabor functions $g_{m,n}(t) = \frac{1}{\sqrt{2\pi}\rho}e^{-\frac{(t-m\tau_{0})^2}{2\rho^2}} e^{j n\omega_{0}}$ where $m, n \in \mathbb{Z}$, $\tau_{0}$ and $\omega_{0}$ are time and frequency shifts, and $\rho$ controls the Gaussian window width.
The joint vertex-time dictionary is constructed from the short time-vertex Fourier bases $\set{\bh_{p,q}\otimes g_{m,n}(t)}_{p,q,m,n}$. The parameters $Q$, $\tau_{0}$, $\omega_{0}$, and $\sigma$ are tunable hyperparameters.
% The Gabor functions provide an optimal balance between time and frequency resolution due to their Gaussian window, minimizing the uncertainty in both domains. The joint vertex-time dictionary is constructed as a collection of short time-vertex Fourier bases $\set{\bh_{p,q}\otimes g_{m,n}(t)}_{p,q,m,n}$. Here, the parameters $Q$, $\tau_{0}$, $\omega_{0}$ and $\sigma$ are hyperparameters that can be fine-tuned.
c) \textbf{\Gls{STVWT}} \cite{Grassi2018}: The mother vertex-time kernel in the spectral time-vertex wavelet basis is also designed in a separable manner. For the vertex dimension, the localized graph wavelet basis is $\bh_{p,s}=(\bU h(s\Lambda)\bU^{\ast})\bdelta_{p}$, where $h(s\Lambda)$ is the scaled Itersine kernel \cite{Perraudin2014} and $\bdelta_p$ is a Kronecker delta centered at vertex $p \in \calV$.
% we use the Itersine kernel with graph dilation (or scaling) (i.e., a multiplication in graph spectral domain), expressed as $h(s\lambda)$. Then localized graph wavelet basis generated from $h(\lambda)$ is given by $\bh_{p,s}=(\bU h(s\lambda)\bU^{\ast})\bdelta_{p}$, with $p \in \cal V$, indicating that this graph basis is localized over the vertex $p$ and spectral position $s\lambda$.
For the time dimension, we use a Morlet wavelet $\psi_{a,b}(t) = \frac{1}{\sqrt{a}} e^{-\frac{(t-b)^2}{2a^2}} e^{j\omega_{0} \frac{t-b}{a}}$
    % \begin{align}
    %     \begin{aligned}
    %         \psi_{a,b}(t) = \frac{1}{\sqrt{a}} e^{-\frac{(t-b)^2}{2a^2}} e^{j\omega_{0} \frac{t-b}{a}}, a>0 ~\mathrm{and}~b\in \Real
    %     \end{aligned}
    % \end{align}
with $a>0$ and $b\in \Real$ as the scale and translation parameters, controlling dilation and time shift respectively, and $\omega_{0}$ as the central frequency. The joint vertex-time dictionary is constructed from a collection of $\set{\bh_{p,s}\otimes\psi_{a,b}(t)}_{p,s,a,b}$. Here, $s$, $a$ and $b$ are tunable hyperparameters.

\cref{fig:Traffic_RMSE_Ratio} shows the reconstruction performance for different training ratios $p_{o}$, from $0.03$ to $0.15$. JECD consistently achieves the lowest RSE, demonstrating superior reconstruction accuracy compared to other methods. At lower training ratios (0.03 to 0.07), it shows a sharper RSE decline, indicating better learning efficiency from limited data. As the training ratio increases, JECD maintains a low and stable RSE, highlighting its robustness. To further demonstrate robustness, we evaluate reconstruction performance at varying Signal-to-Noise Ratio (SNR) levels with a fixed training ratio of 0.1. From \cref{fig:Traffic_RMSE_SNR}, we observe that JECD consistently yields the lowest RSE, showing superior noise resilience. Even at low SNRs, it maintains a lower RSE, and as SNR increases, it continues to outperform the others. JFT has the highest RSE across all SNR levels, indicating weaker performance in noisy conditions, while \gls{STVFT} and \gls{STVWT} perform better than JFT but still lag behind JECD, especially at lower SNRs. These results underscore JECD's effectiveness in handling noise with limited training data.

\begin{figure}[!htbp]
    \centering
    \includegraphics[width=0.9\columnwidth]{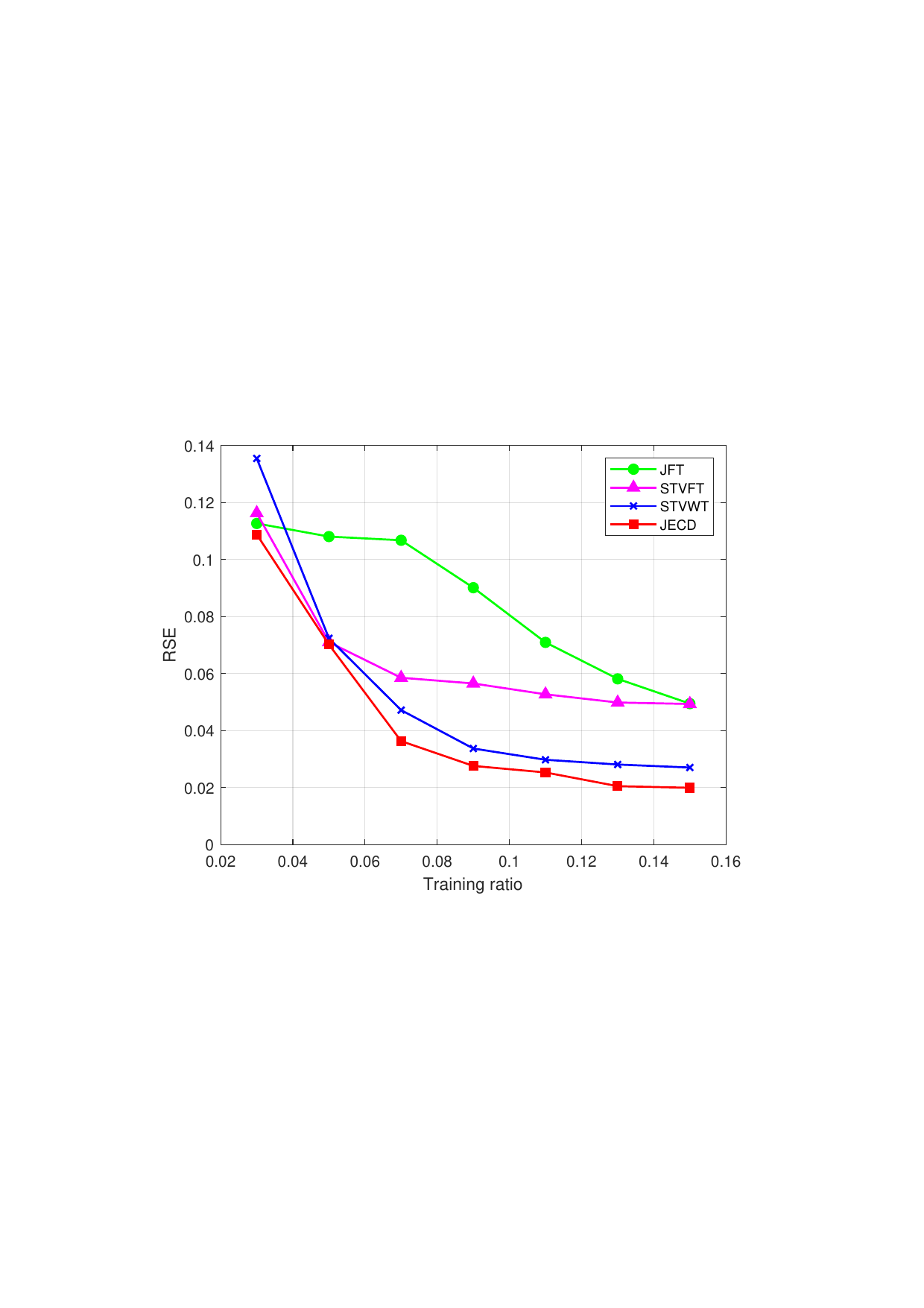}
    \caption{Reconstruction error over test set under different proportions of samples to be used for training. Each point in the figure is obtained by 10 repetitions.}
    \label{fig:Traffic_RMSE_Ratio}
\end{figure}
\begin{figure}[!htbp]
    \centering
    \includegraphics[width=0.9\columnwidth]{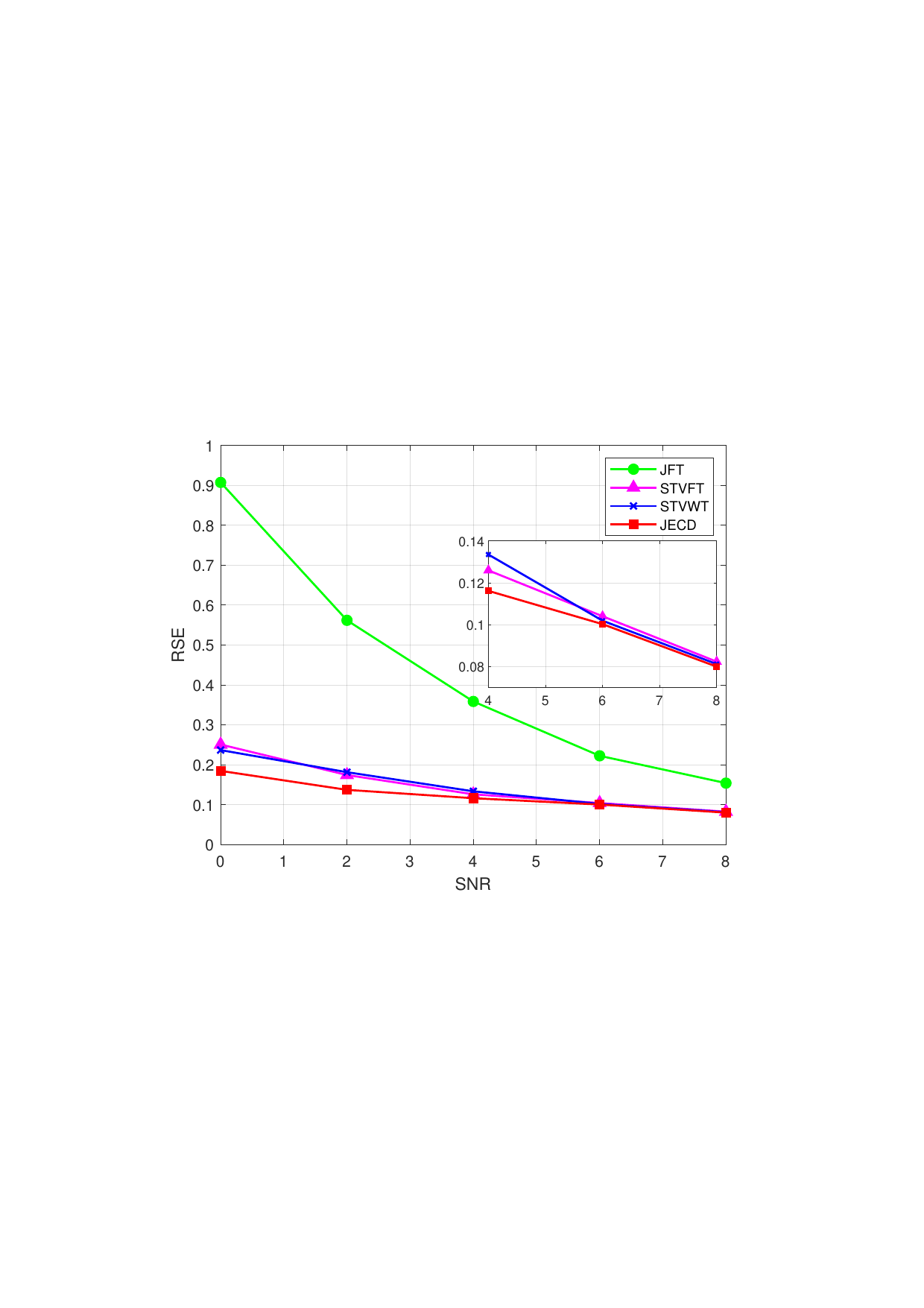}
    \caption{Reconstruction error over test set under different proportions of samples to be used for training. Each point in the figure is obtained by 10 repetitions.}
    \label{fig:Traffic_RMSE_SNR}
\end{figure}

\section{Conclusion}
\label{sec.conclusion}
We have derived an uncertainty principle for generalized graph signals, integrating the classical time-frequency and graph uncertainty principles into a unified framework.
We have introduced the joint vertex-time and spectral-frequency spreads to measure signal localization across these domains. Within this framework, we identified a class of signals with maximal concentration in both domains, which form the atoms of a new joint vertex-time dictionary. This dictionary proves highly effective for reconstructing signals under practical conditions when the data is incomplete or intermittent. Comparative experiments on a real-world dataset demonstrates that our method outperforms existing approaches, achieving superior reconstruction accuracy and enhanced noise robustness.

\appendices

\section{Proof of \cref{thm:UP_GGSP}}\label[Appendix]{prf:thm:UP_GGSP}
Before proceeding to the proof, we introduce several technical results that will help us establish \cref{thm:UP_GGSP}. The following definition and two lemmas are adapted from \cite{Tsitsvero2016} and \cite{Slepian1961}, which derive the minimum angle between two function spaces $\ima(\Pi_{\Sigma})$ and $\ima(\Pi_{\calS})$. The proof essentially follows the same procedure as in \cite{Pollak1961}, where it was originally stated for continuous-time signals.

\begin{Definition}
The angle between two functions $f$ and $g$ in a Hilbert space is given by
\begin{align}
\label{angle_defi}
\theta(f,g)=\cos^{-1}\frac{\Re\{\angles{f,g}\}}{\norm{f}\norm{g}}
\end{align}
where $\angles{}$ is the Hilbert space inner product, and $\norm{}$ its norm.
\end{Definition}

By the Cauchy-Schwarz inequality $\angles{f,g} \leq \norm{f}\norm{g}$ and it is clear that
\begin{align*}
    -1 \leq \frac{\Re\{\angles{f,g}\}}{\norm{f}\norm{g}} \leq 1
\end{align*}
and $\theta(f,g)=0$ if and only if $f = k \cdot g$ for some constant $k$, i.e., when the two functions are colinear.

Recall that $\ima(\Pi_{\Sigma})=\set{f \given \Pi_{\Sigma} f=f}$ and $\ima(\Pi_{\calS})=\set{g \given \Pi_{\calS} g=g}$. The following lemma provides the minimum angle between these two subspaces.

\begin{Lemma}
\label{Lemma_the_min}
For any function $f\in \ima(\Pi_{\Sigma})$,
\begin{align}
    \inf_{g\in \ima(\Pi_{\calS})} \theta(f,g) = \cos^{-1} \frac{\norm{\Pi_{\calS} f}}{\norm{f}}
\end{align}
is achieved by $g=k(\Pi_{\calS}f)$ for any constant $k>0$.
\begin{proof}
For any $g \in \ima(\Pi_{\calS})$ it holds that
\begin{align*}
\Re\{\angles{f,g}\}
& \leq \abs{\angles{f,g}} = \abs{\angles{f-\Pi_{\calS} f+\Pi_{\calS} f, g}}\\
&= \abs{\angles{\Pi_{\calS} f, g}} \leq \norm{\Pi_{\calS} f}\norm{g}
\end{align*}
where the second equality holds due to the fact that $f-\Pi_{\calS} f$ is orthogonal to $g$. Then the angle between these two functions $f$ and $g$ is given by
\begin{align}
\theta(f,g)
&= \cos^{-1} \frac{\Re\{\angles{f,g}\}}{\norm{f}\norm{g}} \\
& \geq \cos^{-1} \frac{\norm{\Pi_{\calS} f}\norm{g}}{\norm{f}\norm{g}}= \cos^{-1} \frac{\norm{\Pi_{\calS} f}}{\norm{f}}.
\end{align}
It follows that
\begin{align}
    \inf_{g \in \ima(\Pi_{\calS})} \theta(f,g) = \cos^{-1} \frac{\norm{\Pi_{\calS} f}}{\norm{f}}
\end{align}
with equality when $g$ and $\Pi_{\calS} f$ are proportional.
\end{proof}
\end{Lemma}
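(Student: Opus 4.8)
The plan is to reduce the angle-minimization to a normalized-inner-product maximization and then bound that maximum by Cauchy--Schwarz, exhibiting an explicit optimizer. Since $\cos^{-1}$ is strictly decreasing on $[-1,1]$, for a fixed $f\in\ima(\Pi_{\Sigma})$ the angle $\theta(f,g)=\cos^{-1}\!\big(\Re\{\angles{f,g}\}/(\norm{f}\norm{g})\big)$ is minimized over nonzero $g\in\ima(\Pi_{\calS})$ exactly when the normalized real part $\Re\{\angles{f,g}\}/(\norm{f}\norm{g})$ is maximized. So I would first establish the sharp upper bound $\Re\{\angles{f,g}\}/(\norm{f}\norm{g})\le \norm{\Pi_{\calS}f}/\norm{f}$ and then produce a $g$ meeting it with equality.

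For the upper bound I would use that $\Pi_{\calS}$ is an \emph{orthogonal} projection onto $\ima(\Pi_{\calS})$: it is Hermitian and idempotent by \cref{def:vt-limiting}, so $f-\Pi_{\calS}f=\overline{\Pi}_{\calS}f$ lies in $\ima(\Pi_{\calS})^{\perp}$ and is therefore orthogonal to every $g\in\ima(\Pi_{\calS})$. Writing $f=\Pi_{\calS}f+(f-\Pi_{\calS}f)$ collapses the inner product to $\angles{f,g}=\angles{\Pi_{\calS}f,g}$, after which $\Re\{\angles{f,g}\}\le\abs{\angles{\Pi_{\calS}f,g}}\le\norm{\Pi_{\calS}f}\norm{g}$ by Cauchy--Schwarz. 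Dividing by $\norm{f}\norm{g}$ gives the claimed bound, hence $\theta(f,g)\ge\cos^{-1}(\norm{\Pi_{\calS}f}/\norm{f})$ for every admissible $g$.

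For achievability I would take $g=k\,\Pi_{\calS}f$ with $k>0$. Membership $g\in\ima(\Pi_{\calS})$ follows from idempotency, since $\Pi_{\calS}(\Pi_{\calS}f)=\Pi_{\calS}f$, and using that $\Pi_{\calS}$ is Hermitian and idempotent one computes $\angles{f,g}=k\angles{f,\Pi_{\calS}f}=k\angles{\Pi_{\calS}f,\Pi_{\calS}f}=k\norm{\Pi_{\calS}f}^{2}$, which is real and nonnegative, while $\norm{g}=k\norm{\Pi_{\calS}f}$. Substituting shows the normalized real part equals $\norm{\Pi_{\calS}f}/\norm{f}$ exactly, so the infimum is attained at this $g$ and the two displayed expressions coincide.

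The argument is short, and the only points requiring care are the reduction turning the angle problem into maximizing a real inner product and the orthogonality step $\angles{f,g}=\angles{\Pi_{\calS}f,g}$, which is precisely where the Hermitian (not merely idempotent) nature of $\Pi_{\calS}$ enters. The one mild obstacle is that the optimizing $g$ must keep $\angles{f,g}$ real and nonnegative, which is exactly why the optimizer is taken along the direction $\Pi_{\calS}f$ with $k>0$; I would also flag the degenerate case $\Pi_{\calS}f=0$, where the bound reads $\theta(f,g)\ge\cos^{-1}0=\pi/2$ and is met by any $g$, so the stated optimizer is to be read up to this trivial situation.
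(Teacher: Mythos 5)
Your proposal is correct and follows essentially the same route as the paper's proof: the decomposition $f=\Pi_{\calS}f+(f-\Pi_{\calS}f)$, orthogonality of $f-\Pi_{\calS}f$ to every $g\in\ima(\Pi_{\calS})$, Cauchy--Schwarz, and attainment at $g=k\,\Pi_{\calS}f$. Your explicit verification that $\angles{f,g}=k\norm{\Pi_{\calS}f}^{2}$ is real and nonnegative (hence the need for $k>0$), and your flagging of the degenerate case $\Pi_{\calS}f=0$, are slightly more careful than the paper's one-line equality claim, but the argument is the same.
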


The spaces $\ima(\Pi_{\Sigma})$ and $\ima(\Pi_{\calS})$ form the minimum angle $\theta_{\min}$, also called the first principle angle \cite{Colub1973}, given by
\begin{align*}
    \theta_{\min}=\inf_{\substack{f \in \ima(\Pi_{\Sigma})\\ g\in \ima(\Pi_{\calS})}} \theta(f,g).
\end{align*}

\begin{Theorem}
\label{min_angle}
The minimum angle $\theta_{\min}$ between $\ima(\Pi_{\Sigma})$ and $\ima(\Pi_{\calS})$ is
\begin{align}
    \theta_{\min} = \cos^{-1}\sqrt{\lambda_{\max}\left(\Pi_{\Sigma}\Pi_{\calS}\Pi_{\Sigma}\right)}
    % =\cos^{-1}\lambda_{\max}\left(\calD\calB\right),
\end{align}
and is achieved by $f=\psi_{0}$ and $g=\Pi_{\calS}\psi_{0}$, where $\psi_{0}$ is an eigenvector of the Hermitian operator $\Pi_{\Sigma}\Pi_{\calS}\Pi_{\Sigma}$ corresponding to the eigenvalue $\lambda_{\max}(\Pi_{\Sigma}\Pi_{\calS}\Pi_{\Sigma})$.
% or $\lambda^{2}_{\max}(\calD\calB)$.
\begin{proof}
From \cref{Lemma_the_min}, we have
\begin{align}
    \inf_{\substack{f \in \ima(\Pi_{\Sigma})\\ g\in \ima(\Pi_{\calS})}} \theta(f,g)=\inf_{f\in \ima(\Pi_{\Sigma})}\cos^{-1}\frac{\norm{\Pi_{\calS} f}}{\norm{f}}
\end{align}
where the infimum on the left hand side is achieved if the infimum on the right hand side is achieved. Since $\cos(\cdot)$ decreases monotonically in $[0,\pi]$, the infimum of the right hand side is equivalent to the supremum of $\frac{\norm{\Pi_{\calS} f}}{\norm{f}}$ with $f \in \ima(\Pi_{\Sigma})$. For any $f\in \ima(\Pi_{\Sigma})$ satisfying $\Pi_{\Sigma} f=f$, we have
\begin{align}
\frac{\norm{\Pi_{\calS} f}_{2}^{2}}{\norm{f}_{2}^{2}}
&=\frac{\norm{\Pi_{\calS}\Pi_{\Sigma}f}_{2}^{2}}{\norm{f}_{2}^{2}}
= \frac{\langle \Pi_{\calS}\Pi_{\Sigma} f, \Pi_{\calS}\Pi_{\Sigma} f\rangle}{\norm{f}_{2}^{2}} \nn
& = \frac{\angles*{f, \Pi_{\Sigma}^{\ast}\Pi_{\calS}^{\ast}\Pi_{\calS}\Pi_{\Sigma} f}}{\norm{f}_{2}^{2}}.\label{pre_max}
\end{align}
Since $\Pi_{\Sigma}$ and $\Pi_{\calS}$ are both Hermitian and idempotent operators (i.e., $\Pi_{\Sigma}^{\ast}=\Pi_{\Sigma}$, $\Pi_{\calS}^{\ast}=\Pi_{\calS}$, $\Pi_{\Sigma}^{2}=\Pi_{\Sigma}$, and $\Pi_{\calS}^2=\Pi_{\calS}$), \cref{pre_max} yields
\begin{align}
\label{opti_Bf}
   \frac{\norm{\Pi_{\calS} f}_{2}^{2}}{\norm{f}_{2}^{2}}= \frac{\angles{f, \Pi_{\Sigma}\Pi_{\calS}\Pi_{\Sigma} f}}{\norm{f}_{2}^{2}}.
\end{align}
Using the Rayleigh-Ritz theorem, we obtain
\begin{align}
\label{min_angle_RR_theorem}
    \max_{f}\frac{\norm{\Pi_{\calS} f}_{2}^{2}}{\norm{f}_{2}^{2}} = \lambda_{\max}(\Pi_{\Sigma}\Pi_{\calS}\Pi_{\Sigma}),
    % =\max_{f}\frac{\angles{f, \Pi_{\Sigma}\Pi_{\calS}\Pi_{\Sigma} f}}{\angles{f,f}} = \lambda_{\max}(\Pi_{\Sigma}\Pi_{\calS}\Pi_{\Sigma}),
    % =\lambda_{\max}^2(\calD\calB),
\end{align}
and the maximum value is achieved by the eigenvector of the operator $\Pi_{\Sigma}\Pi_{\calS}\Pi_{\Sigma}$ corresponding to the eigenvalue $\lambda_{\max}(\Pi_{\Sigma}\Pi_{\calS}\Pi_{\Sigma})$.
\end{proof}
\end{Theorem}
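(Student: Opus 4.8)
The plan is to collapse the two-variable infimum defining $\theta_{\min}$ into a single Rayleigh-quotient maximization. First I would fix $f\in\ima(\Pi_{\Sigma})$ and apply \cref{Lemma_the_min} to evaluate the inner infimum over $g\in\ima(\Pi_{\calS})$, giving $\inf_{g}\theta(f,g)=\cos^{-1}(\norm{\Pi_{\calS} f}/\norm{f})$. The outer infimum over $f\in\ima(\Pi_{\Sigma})$ then remains, and since $\cos^{-1}$ is strictly decreasing on $[0,\pi]$, minimizing this angle over $f$ is equivalent to maximizing the ratio $\norm{\Pi_{\calS} f}^2/\norm{f}^2$ subject to $f\in\ima(\Pi_{\Sigma})$.

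Next I would recast this ratio as a Rayleigh quotient. For $f\in\ima(\Pi_{\Sigma})$ we have $\Pi_{\Sigma} f=f$, so $\norm{\Pi_{\calS} f}^2=\norm{\Pi_{\calS}\Pi_{\Sigma} f}^2=\angles{f,\Pi_{\Sigma}^{\ast}\Pi_{\calS}^{\ast}\Pi_{\calS}\Pi_{\Sigma} f}$. Using the Hermitian and idempotent identities $\Pi_{\Sigma}^{\ast}=\Pi_{\Sigma}$, $\Pi_{\calS}^{\ast}=\Pi_{\calS}$, $\Pi_{\Sigma}^{2}=\Pi_{\Sigma}$, $\Pi_{\calS}^{2}=\Pi_{\calS}$ collapses the middle factors, yielding $\angles{f,\Pi_{\Sigma}\Pi_{\calS}\Pi_{\Sigma} f}$. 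Hence on $\ima(\Pi_{\Sigma})$ the quantity to maximize is exactly the Rayleigh quotient of the Hermitian positive-semidefinite operator $\Pi_{\Sigma}\Pi_{\calS}\Pi_{\Sigma}$.

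I would then invoke the Rayleigh--Ritz variational characterization: the supremum of $\angles{f,\Pi_{\Sigma}\Pi_{\calS}\Pi_{\Sigma} f}/\norm{f}^2$ over the whole space equals $\lambda_{\max}(\Pi_{\Sigma}\Pi_{\calS}\Pi_{\Sigma})$ and is attained at a top eigenvector $\psi_{0}$. Two short checks close the argument. First, $\psi_{0}\in\ima(\Pi_{\Sigma})$: applying $\Pi_{\Sigma}$ to $\Pi_{\Sigma}\Pi_{\calS}\Pi_{\Sigma}\psi_{0}=\lambda_{\max}\psi_{0}$ and using $\Pi_{\Sigma}^2=\Pi_{\Sigma}$ together with $\lambda_{\max}>0$ forces $\Pi_{\Sigma}\psi_{0}=\psi_{0}$, so $\psi_{0}$ is feasible. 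Second, restricting to $\ima(\Pi_{\Sigma})$ costs nothing, since the unrestricted supremum is already attained at this feasible $\psi_{0}$. Substituting $f=\psi_{0}$ into \cref{Lemma_the_min} identifies the optimal $g=\Pi_{\calS}\psi_{0}$, and reinstating the decreasing $\cos^{-1}$ gives $\theta_{\min}=\cos^{-1}\sqrt{\lambda_{\max}(\Pi_{\Sigma}\Pi_{\calS}\Pi_{\Sigma})}$.

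The step I expect to be delicate is not the algebra but the existence of $\lambda_{\max}$ and an attaining eigenvector $\psi_{0}$. Because $\calH=L^{2}(\calT)$ with $\calT=\bbR$ is infinite-dimensional, a bounded self-adjoint operator need not attain the top of its spectrum, so Rayleigh--Ritz does not apply verbatim. The resolution is to show $\Pi_{\Sigma}\Pi_{\calS}\Pi_{\Sigma}$ is compact (in fact trace-class), so that its spectrum is discrete with a genuine largest eigenvalue; in the separable case $\calS=\calV'\times\calT'$, $\Sigma=\calW'\times\Omega'$ this reduces to the classical compactness of the time--band-limiting (prolate) operator $\Pi_{\Omega'}\Pi_{\calT'}\Pi_{\Omega'}$ tensored with the finite-rank vertex operator $\Pi_{\calW'}\Pi_{\calV'}\Pi_{\calW'}$. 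Granting compactness, the remaining manipulations are routine.
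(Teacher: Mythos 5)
Your proposal matches the paper's own proof essentially step for step: \cref{Lemma_the_min} collapses the inner infimum over $g$, the Hermitian and idempotent identities turn $\norm{\Pi_{\calS}f}^2/\norm{f}^2$ on $\ima(\Pi_{\Sigma})$ into the Rayleigh quotient of $\Pi_{\Sigma}\Pi_{\calS}\Pi_{\Sigma}$, and Rayleigh--Ritz identifies the extremizer $\psi_{0}$ together with the optimal $g=\Pi_{\calS}\psi_{0}$. Your two extra checks --- feasibility $\Pi_{\Sigma}\psi_{0}=\psi_{0}$ of the top eigenvector, and the observation that in the infinite-dimensional space $L^{2}(\calV\times\calT)$ one needs compactness of $\Pi_{\Sigma}\Pi_{\calS}\Pi_{\Sigma}$ for the supremum to be attained at all --- are points the paper's proof passes over silently, and they strengthen rather than change the argument.
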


We are now ready to prove \cref{thm:UP_GGSP}.
Without loss of generality, we derive which values of $\beta_{\Sigma}$ are attainable for every choice of $\alpha_{\calS}$, assuming the unit norm signal $f$. We consider two cases $\alpha_{\calS}=1$ and $\alpha_{\calS}\in (0,1)$ respectively. The case $\alpha_{\calS}=1$ means that all the energy of signal is supported only on $\calS$. According to \cref{min_angle_RR_theorem} and Theorem \cref{min_angle}, the supremum of $\beta^2_{\Sigma}$ is given by
\begin{align}
    \sup_{f\in \ima(\Pi_{\calS}), \norm{f}=1} \beta_{\Sigma}^2 = \lambda_{\max}(\Pi_{\calS}\Pi_{\Sigma}\Pi_{\calS})
    % =\lambda_{\max}^2(\Pi_{\Sigma}\calD)
\label{eq.sup_beta}
\end{align}
by exchanging the role of $\Pi_{\Sigma}$ and $\Pi_{\calS}$. Similarly, the infimum of $\beta^2_{\Sigma}$ is given by
\begin{align}
    \inf_{f\in \ima(\Pi_{\calS}), \norm{f}=1} \beta_{\Sigma}^2= 1-\lambda_{\max}(\Pi_{\calS}\overline{\Pi}_{\Sigma}\Pi_{\calS})
    % = 1-\lambda_{\max}^2(\bar\calB\calD)
\label{eq.inf_beta}
\end{align}
which is equivalent to the case of maximal concentration on the complement subset $\bar\Sigma$. Thus, in the case of $\alpha_{\calS}=1$, $\beta_{\Sigma}^2$ lies in the interval $[1-\lambda_{\max}(\Pi_{\calS}\overline{\Pi}_{\Sigma}\Pi_{\calS}),\lambda_{\max}(\Pi_{\calS}\Pi_{\Sigma}\Pi_{\calS})]$.
% or $[1-\lambda_{\max}^2(\bar\calB\calD),\lambda_{\max}^2(\calB\calD)]$.
Next we are going to consider the behavior of $\beta_{\Sigma}$ for $\alpha_{\calS}$ belong to $\alpha_{\calS} \in (0,1)$. We can decompose any signal $f$ as
\begin{align}
\label{sig_decom}
    f = \lambda \Pi_{\calS} f+ \gamma \Pi_{\Sigma}f+g,
\end{align}
where $g$ is a signal orthogonal to both $\ima(\Pi_{\calS})$ and $\ima(\Pi_{\Sigma})$. Our goal is to find the closest signal to $f$ in the space spanned by $\Pi_{\calS} f$ and $\Pi_{\Sigma} f$. To do this, we calculate the inner products of \cref{sig_decom} successively with $f$, $\Pi_{\calS}f$, $\Pi_{\Sigma} f$ and $g$, and arrive at the following system of equations:
\begin{align}
\label{in_pro_sys_equ}
\centering
\left\{\begin{aligned}
    &1 = \lambda \alpha_{\calS}^2 + \gamma \beta_{\Sigma}^2 + \angles{g,f},\\
    &\alpha_{\calS}^2 = \lambda \alpha_{\calS}^2 + \gamma \angles{\Pi_{\Sigma} f, \Pi_{\calS} f},\\
    &\beta_{\Sigma}^2 = \lambda \angles{\Pi_{\calS} f, \Pi_{\Sigma} f} + \gamma \beta_{\Sigma}^2,\\
    &\angles{f,g} = \angles{g,g}.
\end{aligned}\right.
\end{align}
After eliminating $\angles{g,f}$, $\lambda$ and $\gamma$ from \cref{in_pro_sys_equ}, we obtain
\begin{align}
\begin{aligned}
    \label{eliminating_results}
    &\beta_{\Sigma}^2-2\Re\{\angles{\Pi_{\calS} f,\Pi_{\Sigma} f}+\alpha_{\calS}^2 \\
    =&\left(1-\frac{\abs{\angles{\Pi_{\calS} f,\Pi_{\Sigma} f}}^2}{\alpha_{\calS}^2\beta_{\Sigma}^2}\right)-\norm{g}_{2}^{2}\left(1-\frac{\abs{\angles{\Pi_{\calS} f,\Pi_{\Sigma} f}}^2}{\alpha_{\calS}^2\beta_{\Sigma}^2}\right).
\end{aligned}
\end{align}
According to \cref{angle_defi}, we define
\begin{align}
\label{angle_betw_proof}
    \cos\theta = \frac{\Re\{\angles{\Pi_{\calS} f,\Pi_{\Sigma} f}\}}{\norm{\Pi_{\calS} f}\norm{\Pi_{\Sigma} f}}
\end{align}
to measure the angle $\theta$ between $\Pi_{\calS} f\in \ima(\Pi_{\calS})$ and  $\Pi_{\Sigma} f \in \ima(\Pi_{\Sigma})$. From \cref{min_angle}, the angle $\theta$ between $\Pi_{\calS} f\in \ima(\Pi_{\calS})$ and  $\Pi_{\Sigma} f \in \ima(\Pi_{\Sigma})$ would be larger than the minimum one, i.e.,
\begin{align}
\label{angle_bound}
    \theta \geq\cos^{-1}\sqrt{\lambda_{\max}\left(\Pi_{\Sigma}\Pi_{\calS}\Pi_{\Sigma}\right)}=\cos^{-1}\lambda_{\max}\left(\Pi_{\calS}\Pi_{\Sigma}\right).
\end{align}
Since $\norm{\Pi_{\calS} f}=\alpha_{\calS}$, $\norm{\Pi_{\Sigma} f}=\beta_{\Sigma}$, then \cref{angle_betw_proof} can be written as
\begin{align}
    \alpha_{\calS}\beta_{\Sigma} \cos\theta= \Re\{\angles{\Pi_{\calS} f,\Pi_{\Sigma} f}\} \leq \abs{\angles{\Pi_{\calS} f,\Pi_{\Sigma} f}} \leq \alpha_{\calS} \beta_{\Sigma},
\end{align}
and we can further write
\begin{align}
\label{angle_inequ}
    0 \leq 1- \frac{\abs{\angles{\Pi_{\calS} f,\Pi_{\Sigma} f}}^2}{\alpha_{\calS}^2\beta_{\Sigma}^2} \leq 1-\cos^{2}\theta.
\end{align}
Combining \cref{angle_betw_proof} and \cref{angle_inequ}, \cref{eliminating_results} can be written as
\begin{align}
\begin{aligned}
\label{eliminating_results_refined}
    &(\beta_{\Sigma} -\alpha_{\calS}\cos\theta)^2\\
    &=\beta_{\Sigma}^2-2\beta_{\Sigma}\alpha_{\calS}\cos\theta+\alpha_{\calS}^2\cos^2\theta\\
    &= \left( 1-\frac{\abs{\angles{\Pi_{\calS} f,\Pi_{\Sigma} f}}^2}{\alpha_{\calS}^2\beta_{\Sigma}^2}\right)-\norm{g}_{2}^{2}\left( 1-\frac{\abs{\angles{\Pi_{\calS} f,\Pi_{\Sigma}f}}^2}{\alpha_{\calS}^2\beta_{\Sigma}^2}\right)\\
    &\leq -\alpha_{\calS}^2\sin^2\theta+1-\cos^2\theta= (1-\alpha_{\calS}^2)\sin^2\theta,
\end{aligned}
\end{align}
where the equality holds if and only if $\angles{\Pi_{\calS} f, \Pi_{\Sigma} f}$ is real and $g=0$. Next, from \cref{eliminating_results_refined} with $\alpha_{\calS}= \cos(\cos^{-1}\alpha_{\calS})$, we have
\begin{align}
    \beta_{\Sigma} \leq \cos\left(\theta-\cos^{-1}\alpha_{\calS}\right),
\end{align}
from which it follows, by means of the bound \cref{angle_bound}, that
\begin{align}
\label{eq.beta_range}
    \beta_{\Sigma} \leq \cos\left(\cos^{-1}\sqrt{\lambda_{\max}\left(\Pi_{\Sigma}\Pi_{\calS}\Pi_{\Sigma}\right)}-\cos^{-1}\alpha_{\calS}\right).
\end{align}
Therefore,
\begin{align}
\label{beta_range2}
 \cos^{-1}\alpha_{\calS} + \cos^{-1}\beta_{\Sigma} \geq \cos^{-1}\sqrt{\lambda_{\max}(\Pi_{\Sigma}\Pi_{\calS}\Pi_{\Sigma})}.
\end{align}
where equality holds by letting
\begin{align}
    f = p\psi_{0} + q \Pi_{\calS}\psi_{0}
\label{eq.unique_f}
\end{align}
with
\begin{align}
\begin{aligned}
    p &= \sqrt{\frac{1-\alpha^2_{\calS}}{1-\lambda_{\max}(\Pi_{\Sigma}\Pi_{\calS}\Pi_{\Sigma})}}, \\
    q &=\frac{\alpha_{\calS}}{\sqrt{\lambda_{\max}(\Pi_{\Sigma}\Pi_{\calS}\Pi_{\Sigma})}}-\sqrt{\frac{1-\alpha_{\calS}^2}{1-\lambda_{\max}(\Pi_{\Sigma}\Pi_{\calS}\Pi_{\Sigma})}}
\label{eq.para_in_f}
\end{aligned}
\end{align}
and where $\psi_{0}$ is an eigenvector of the Hermitian operator $\Pi_{\Sigma}\Pi_{\calS}\Pi_{\Sigma}$ corresponding to the eigenvalue $\lambda_{\max}(\Pi_{\Sigma}\Pi_{\calS}\Pi_{\Sigma})$. Taking the fact $\norm{\overline{\Pi}_{\calS} f}= \sqrt{1-\alpha_{\calS}^2}$ and applying the same steps between \cref{sig_decom} and \cref{beta_range2} to the operators $\Pi_{\Sigma}\overline{\Pi}_{\calS}\Pi_{\Sigma}$, we obtain the inequality
\begin{align}
\begin{aligned}
\label{BD_B}
 \cos^{-1}\sqrt{1-\alpha_{\calS}^2} + &\cos^{-1}\beta_{\Sigma}\\
 &\geq \cos^{-1}\sqrt{\lambda_{\max}(\Pi_{\Sigma}\overline{\Pi}_{\calS}\Pi_{\Sigma})}.
\end{aligned}
\end{align}
Similarly, we apply the same steps between \cref{sig_decom} and \cref{beta_range2} to the operators $\overline{\Pi}_{\Sigma}\Pi_{\calS}\overline{\Pi}_{\Sigma}$ and $\overline{\Pi}_{\Sigma}\overline{\Pi}_{\calS}\overline{\Pi}_{\Sigma}$ by means of $\norm{\overline{\Pi}_{\Sigma} f}=\sqrt{1-\beta^2_{\Sigma}}$, to further obtain
\begin{align}
\begin{aligned}
     \cos^{-1}\alpha_{\calS} + \cos^{-1}&\sqrt{1-\beta_{\Sigma}^2} \\
     &\geq \cos^{-1}\sqrt{\lambda_{\max}(\overline{\Pi}_{\Sigma}\Pi_{\calS}\overline{\Pi}_{\Sigma})}
\end{aligned}
\end{align}
and
\begin{align}
\begin{aligned}
     \cos^{-1}\sqrt{1-\alpha_{\calS}^2} &+ \cos^{-1}\sqrt{1-\beta_{\Sigma}^2}\\
     &\geq \cos^{-1}\sqrt{\lambda_{\max}(\overline{\Pi}_{\Sigma}\overline{\Pi}_{\calS}\overline{\Pi}_{\Sigma})}.
\end{aligned}
\end{align}

We have completed the proof of the four inequalities in \cref{eq.Theorem_feasible region}. For $\beta_{\Sigma}=1$, $\alpha_{\calS}^2$ lies in the interval $[1-\lambda_{\max}(\Pi_{\Sigma}\overline{\Pi}_{\calS}\Pi_{\Sigma}),\lambda_{\max}(\Pi_{\Sigma}\Pi_{\calS}\Pi_{\Sigma})]$,
% or $[1-\lambda_{\max}^2(\mathcal{\bar{D}B}),\lambda_{\max}^2(\Pi_{\calS}\calB)]$,
and the concentrations are achievable by the eigenvectors of $\Pi_{\Sigma}\Pi_{\calS}\Pi_{\Sigma}$ which belong to $\Pi_{\Sigma}$ and their linear combinations. Continuing
by analogy one can show that all the values $\alpha_{\calS}$ and $\beta_{\Sigma}$ belonging to the border of $\Theta$ are achievable.  All the points inside $\Theta$ are achievable by the functions build up from different combinations of left and right singular vectors of $\Pi_{\Sigma}\Pi_{\calS}\Pi_{\Sigma}$,  $\Pi_{\Sigma}\overline{\Pi}_{\calS}\Pi_{\Sigma}$, $\overline{\Pi}_{\Sigma}\Pi_{\calS}\overline{\Pi}_{\Sigma}$ and $\overline{\Pi}_{\Sigma}\overline{\Pi}_{\calS}\overline{\Pi}_{\Sigma}$. This concludes the proof of \cref{thm:UP_GGSP}.

\section{Proof of \cref{thm:localized_both_domains}}\label[Appendix]{prf:thm:localized_both_domains}
To prove the forward direction, we repeatedly apply \cref{eq.vertex_time_limited_signals} and \cref{eq.spec_fre_limited_signals} to obtain
\begin{align}
    \Pi_{\Sigma}\Pi_{\calS}\Pi_{\Sigma} f = \Pi_{\Sigma}\Pi_{\calS} f = \Pi_{\calS}\Pi_{\Sigma} f = \Pi_{\Sigma} f = f,
    \label{BDBf_Bf_f}
\end{align}
which shows that $f$ is an eigenvector of $\Pi_{\Sigma}\Pi_{\calS}\Pi_{\Sigma}$ with a unit eigenvalue if it is perfectly localized in both domains. Conversely, if $f$ is an eigenvector of $\Pi_{\Sigma}\Pi_{\calS}\Pi_{\Sigma}$ with a unit eigenvalue, then
\begin{align}
    \Pi_{\Sigma}\Pi_{\calS}\Pi_{\Sigma} f = f.
    \label{eq.eigenvector_BDB}
\end{align}
Multiplying \cref{eq.eigenvector_BDB} by $\Pi_{\Sigma}$ and using $\Pi_{\Sigma}^2=\Pi_{\Sigma}$, we have
\begin{align}
    \Pi_{\Sigma}\Pi_{\calS}\Pi_{\Sigma} f = \Pi_{\Sigma} f.
    \label{eq.eigenvector_BDB_Bf}
\end{align}
By equating \cref{eq.eigenvector_BDB} and \cref{eq.eigenvector_BDB_Bf}, we obtain
\begin{align}
    \Pi_{\Sigma} f =f,
    \label{eq.spe_fre_localized}
\end{align}
indicating perfect localization in the joint spectral-frequency domain. Utilizing the Rayleigh-Ritz theorem \cite{Horn1985}, along with \cref{eq.spe_fre_localized} and the Hermitian property of $\Pi_{\Sigma}$ (i.e., $\Pi_{\Sigma}^{\ast}=\Pi_{\Sigma}$), we have
\begin{align}
   \max_{f} \frac{\angles{f,\Pi_{\Sigma}\Pi_{\calS}\Pi_{\Sigma}f}}{\angles{f,f}} = \max_{f}\frac{\angles{f,\Pi_{\calS} f}}{\angles{f,f}}=1,
\end{align}
which confirms that $f$ also satisfies $\Pi_{\calS} f = f$, meaning that it is perfectly localized in the joint vertex-time domain. This completes the proof of \cref{thm:localized_both_domains}.

\section{Proof of \cref{Theo.solu_optimization}}\label[Appendix]{prf:thm:solu_optimization}

Substituting the joint band-limiting constraint into the objective function in \cref{eq.optimization_problem}, we obtain
\begin{align}
    \begin{aligned}
        \xi_{i} = &\argmax_{\xi_{i}}~\norm{\Pi_{S}\Pi_{\Sigma}\xi_{i}}\\
        & \ST \norm{\xi_{i}} = 1, ~\angles{\xi_{i},\xi_{j}}=0,~j\neq i.
    \end{aligned}
\label{eq.optimization_problem_v2}
\end{align}
By the Rayleigh-Ritz theorem \cite{Horn1985}, the solutions are the eigenvectors of $(\Pi_{S}\Pi_{\Sigma})^{\ast}\Pi_{S}\Pi_{\Sigma}=\Pi_{\Sigma}\Pi_{S}\Pi_{\Sigma}$, i.e., $\Pi_{\Sigma}\Pi_{S}\Pi_{\Sigma} \xi_{i} = \lambda_{i} \xi_{i}$. Since $\Pi_{\Sigma}\xi_{i}=\xi_{i}$ and $\left(\Pi_{\Sigma}\right)^{\ast}=\Pi_{\Sigma}$, it follows that
\begin{align}
    \angles{\psi_{i},\Pi_{\Sigma}\Pi_{S}\Pi_{\Sigma}\psi_{j}}=\lambda_{j}\delta_{ij}.
    % = \angles{\left(\Pi_{\Sigma}\right)^{\ast}\psi_{i},\Pi_{S}\Pi_{\Sigma}\psi_{j}} =  \angles{\psi_{i},\Pi_{S}\psi_{j}} = \lambda_{j}\delta_{ij}.
\end{align}
This completes the proof of \cref{Theo.solu_optimization}.

% \section{Appendix}
% \label{sec:proof_Th3_UPGGSP}

% Below is an example of how to insert images. Delete the ``\vspace'' line,
% uncomment the preceding line ``\centerline...'' and replace ``imageX.ps''
% with a suitable PostScript file name.
% -------------------------------------------------------------------------

% To start a new column (but not a new page) and help balance the last-page
% column length use \vfill\pagebreak.
% -------------------------------------------------------------------------
%\vfill
%\pagebreak

%\vfill\pagebreak

% \section{REFERENCES}
% \label{sec:refs}
% References should be produced using the bibtex program from suitable
% BiBTeX files (here: strings, refs, manuals). The IEEEbib.bst bibliography
% style file from IEEE produces unsorted bibliography list.
% -------------------------------------------------------------------------
\bibliographystyle{IEEEtran}
\bibliography{refs}

% Generated by IEEEtran.bst, version: 1.14 (2015/08/26)
\begin{thebibliography}{10}
\providecommand{\url}[1]{#1}
\csname url@samestyle\endcsname
\providecommand{\newblock}{\relax}
\providecommand{\bibinfo}[2]{#2}
\providecommand{\BIBentrySTDinterwordspacing}{\spaceskip=0pt\relax}
\providecommand{\BIBentryALTinterwordstretchfactor}{4}
\providecommand{\BIBentryALTinterwordspacing}{\spaceskip=\fontdimen2\font plus
\BIBentryALTinterwordstretchfactor\fontdimen3\font minus
  \fontdimen4\font\relax}
\providecommand{\BIBforeignlanguage}[2]{{%
\expandafter\ifx\csname l@#1\endcsname\relax
\typeout{** WARNING: IEEEtran.bst: No hyphenation pattern has been}%
\typeout{** loaded for the language `#1'. Using the pattern for}%
\typeout{** the default language instead.}%
\else
\language=\csname l@#1\endcsname
\fi
#2}}
\providecommand{\BIBdecl}{\relax}
\BIBdecl

\bibitem{Slepian1983}
D.~Slepian, ``Some comments on fourier analysis, uncertainty and modeling,''
  \emph{SIAM Rev.}, vol.~25, no.~3, pp. 379--393, 1983.

\bibitem{Laudau1980}
H.~J. Landau and H.~Widom, ``Eigenvalue distribution of time and frequency
  limiting,'' \emph{J. Math. Anal. Appl.}, vol.~77, no.~2, pp. 469--481, 1980.

\bibitem{Slepian1961}
D.~Slepian and H.~O. Pollak, ``Prolate spheroidal wave functions, fourier
  analysis and uncertainty, {I},'' \emph{Bell Syst. Tech. J.}, vol.~40, no.~1,
  pp. 43--63, 1961.

\bibitem{Pollak1961}
H.~J. Landau and H.~O. Pollak, ``Prolate spheroidal wave functions, fourier
  analysis and uncertainty, {II},'' \emph{Bell Syst. Tech. J.}, vol.~40, no.~1,
  pp. 65--84, 1961.

\bibitem{Laudau1962}
------, ``Prolate spheroidal wave functions, fourier analysis and uncertainty,
  {III}: The dimension of the space of essentially time- and band-limited
  signals,'' \emph{Bell Syst. Tech. J.}, vol.~41, no.~4, pp. 1295--1336, 1962.

\bibitem{Slepian1978}
D.~Slepian and H.~O. Pollak, ``Prolate spheroidal wave functions, fourier
  analysis, and uncertainty, {V}: The discrete case,'' \emph{Bell Syst. Tech.
  J.}, vol.~57, no.~5, pp. 1371--1430, 1978.

\bibitem{orfanidis1996}
S.~Orfanidis, \emph{Introduction to Signal Processing}.\hskip 1em plus 0.5em
  minus 0.4em\relax Prentice Hall, 1996.

\bibitem{Tsitsvero2016}
M.~Tsitsvero, S.~Barbarossa, and P.~Di~Lorenzo, ``Signals on graphs:
  Uncertainty principle and sampling,'' \emph{IEEE Trans. on Signal Process.},
  vol.~64, no.~18, pp. 4845--4860, 2016.

\bibitem{Tsitsvero2015}
M.~Tsitsvero, S.~Barbarossa, and P.~D. Lorenzo, ``Uncertainty principle and
  sampling of signals defined on graphs,'' in \emph{Asilomar Conf. Signals,
  Syst. Comput.}, 2015, pp. 1813--1818.

\bibitem{Loukas2016}
A.~Loukas and D.~Foucard, ``Frequency analysis of temporal graph signals,'' in
  \emph{Proc. IEEE Global Conf. Signal Inf. Process.}, 2016, pp. 346--350.

\bibitem{Perraudin2015}
N.~Perraudin, A.~Loukas, and P.~Vandergheynst, ``Towards stationary time-vertex
  signal processing,'' in \emph{Proc. IEEE Int. Conf. Acoust., Speech, Signal
  Process.}, New Orleans, US, 2017, pp. 3914--3918.

\bibitem{Zhang2021}
J.~Zhang, H.~Feng, D.~B. Tay, and S.~Xu, ``Theory and design of joint
  time-vertex nonsubsampled filter banks,'' \emph{IEEE Trans. Signal Process.},
  vol.~69, pp. 1968--1982, 2021.

\bibitem{Grassi2016}
F.~Grassi, N.~Perraudin, and B.~Ricaud, ``Tracking time-vertex propagation
  using dynamic graph wavelets,'' in \emph{Proc. IEEE Global Conf. Signal Inf.
  Process.}, 2016, pp. 351--355.

\bibitem{Grassi2018}
F.~Grassi, A.~Loukas, N.~Perraudin, and B.~Ricaud, ``A time-vertex signal
  processing framework: scalable processing and meaningful representations for
  time-series on graphs,'' \emph{IEEE Trans. Signal Process.}, vol.~66, no.~3,
  pp. 817--829, Feb. 2018.

\bibitem{JiTay:J19}
F.~Ji and W.~P. Tay, ``A {Hilbert} space theory of generalized graph signal
  processing,'' \emph{{IEEE} Trans. Signal Process.}, vol.~67, no.~24, pp. 6188
  -- 6203, Dec. 2019.

\bibitem{Shuman2013}
D.~I. Shuman, S.~K. Narang, P.~Frossard, A.~Ortega, and P.~Vandergheynst, ``The
  emerging field of signal processing on graphs: Extending highdimensional data
  analysis to networks and other irregular domains,'' \emph{IEEE Signal
  Process. Mag.}, vol.~30, no.~3, pp. 83--98, 2013.

\bibitem{Pesenson2010}
I.~Z. Pesenson and M.~Z. Pesenson, ``Sampling, filtering and sparse
  approximations on combinatorial graphs,'' \emph{J. Fourier Anal. Appl.},
  vol.~16, no.~6, pp. 921--942, 2010.

\bibitem{Sandryhaila2013}
A.~Sandryhaila and J.~F.Moura, ``Discrete signal processing on graphs,''
  \emph{IEEE Trans. Signal Process.}, vol.~61, no.~7, pp. 1644--1654, 2013.

\bibitem{Pesenson2008}
I.~Z. Pesenson, ``Sampling in {P}aley-{W}iener spaces on combinatorial
  graphs,'' \emph{Trans. Amer. Math. Soc.}, vol. 360, no.~10, pp. 5603--5627,
  2008.

\bibitem{Rabbat2012}
X.~Zhu and M.~Rabbat, ``Approximating signals supported on graphs,'' in
  \emph{Proc. IEEE Int. Conf. Acoust., Speech, Signal Process.}, Mar. 2012, pp.
  3921--3924.

\bibitem{ortega2018graph}
A.~Ortega, P.~Frossard, J.~Kova{\v{c}}evi{\'c}, J.~M. Moura, and
  P.~Vandergheynst, ``Graph signal processing: Overview, challenges, and
  applications,'' \emph{Proc. of the IEEE}, vol. 106, no.~5, pp. 808--828,
  2018.

\bibitem{Elad2006}
M.~Elad and M.~Aharon, ``Image denoising via sparse and redundant
  representations over learned dictionaries,'' \emph{IEEE Trans. Image
  Process.}, vol.~15, no.~12, pp. 3736--3745, Dec. 2006.

\bibitem{Mairal2009}
J.~Mairal, F.~Bach, J.~Ponce, and G.~Sapiro, ``Online dictionary learning for
  sparse coding,'' in \emph{Proc. Annu. Int. Conf. Mach. Learn.}, 2009, pp.
  689--696.

\bibitem{Rencker2019}
L.~Rencker, F.~Bach, W.~Wang, and M.~D. Plumbley, ``Sparse recovery and
  dictionary learning from nonlinear compressvie measurements,'' \emph{IEEE
  Trans. Signal Process.}, vol.~67, no.~21, pp. 5659--5670, Nov. 2019.

\bibitem{Tibshirani1996}
R.~Tibshirani, ``Regression shrinkage and selection via the lasso,'' \emph{J.
  R. Stat. Soc. Series B. Stat. Methodol.}, vol.~58, no.~1, pp. 267--288, 1996.

\bibitem{Lecun1998}
Y.~Lecun, L.~Bottou, Y.~Bengio, and P.~Haffner, ``Gradient-based learning
  applied to document recognition,'' \emph{Proceedings of IEEE}, vol.~86,
  no.~11, pp. 2278--2324, 1998.

\bibitem{Loukas2019}
A.~Loukas and N.~Perraudin, ``Stationary time-vertex signal processing,''
  \emph{EURASIP J. Adv. Signal Process.}, no.~36, 2019.

\bibitem{Perraudin2014}
N.~Perraudin, J.~Paratte, D.~Shuman, V.~Kalofolias, P.~Vandergheynst, and D.~K.
  Hammond, ``Gspbox: A toolbox for signal processing on graphs,''
  \emph{ArXiv:e-prints}, Feb. 2014.

\bibitem{Colub1973}
A.~Bj\"{o}rck and G.~H. Colub, ``Numerical methods for computing angles between
  linear subspaces,'' \emph{Math. Comput.}, vol.~27, no. 123, pp. 579--594,
  1973.

\bibitem{Horn1985}
R.~A. Horn and C.~R. Johnson, \emph{Matrix Analysis}.\hskip 1em plus 0.5em
  minus 0.4em\relax Cambridge: Cambridge University Press, 1985.

\end{thebibliography}

\end{document}